\newlist{abbrv}{itemize}{1}
\setlist[abbrv,1]{label=,labelwidth=1in,align=parleft,itemsep=0.1\baselineskip,leftmargin=!}
\newcommand{\eps}{\varepsilon}
\newcommand{\opt}{\textsf{opt}}
\newcommand{\copt}{C_{\textsf{opt}}}
\newcommand{\sopt}{s_{\textsf{opt}}}
\newcommand{\pho}{^\rho}
\newcommand{\oeps}{1+\eps}
\newcommand{\prob}{\textsf{ums}}
\begin{document}
	\title{EPTAS for Load Balancing Problem on Parallel Machines with a Non-renewable Resource\thanks{Supported in part by ISF - Israeli Science Foundation grant number 308/18.}}
	\author{G. Jaykrishnan\inst{1} \and
		Asaf Levin\inst{2}}
	\authorrunning{G. Jaykrishnan \and A. Levin}
	%
	\institute{Faculty of Industrial Engineering and Management, The Technion, Haifa, Israel.\email{jaykrishnang@hotmail.com} \and
		Faculty of Industrial Engineering and Management, The Technion, Haifa, Israel.\email{levinas@ie.technion.ac.il}}
	
	\maketitle

\begin{abstract}The problem considered is the non-preemptive scheduling of independent jobs that consume a resource (which is non-renewable and replenished regularly) on parallel uniformly related machines. The input defines the speed of machines, size of jobs, the quantity of resource required by the jobs,  the replenished quantities, and replenishment dates of the resource. Every job can start processing only after the required quantity of the resource is allocated to the job. The objective function is the minimization of the convex combination of the makespan and an objective that is equivalent to the $l_p$-norm of the vector of loads of the machines.  We present an EPTAS for this problem.  Prior to our work only a PTAS was known in this non-renewable resource settings and this PTAS was only for the special case of our problem of makespan minimization on identical machines.
\end{abstract}

\section{Introduction}
The problem \prob\ considered is the non-preemptive scheduling of $n$ independent jobs that consume a single resource (which is non-renewable and replenished regularly) on $m$ uniformly related machines. In the setting of uniformly related machines we let $s_i>0$ be the speed  of machine $i$. Each job $j$ has a size $p_{j}> 0$ associated with it and the processing time of job $j$ on machine $i$ is ${p_{j}}/{s_i}$. The speed of the fastest machine is without loss of generality $1$, else the speed of the machines can be converted to relative speeds with respect to the speed of the fastest machine.  
In our problem, each job $j$ consumes $d_{j}\geq 0$ quantity of resource to start its processing. That is, in order to start processing job $j$, the resource should be allocated to the job at the start of the processing of job $j$.  Therefore, jobs may have to wait if sufficient quantity of the resource is not available and thus idle time is essential for this problem. The resource is supplied at $e$ different time points. The resource replenishment time for time period $k$ is denoted by $u_k$ and the quantity of resource supplied at this time point is $q_k\geq 0$.  Both $u_k$ and $q_k$ (for all $k$) are given as part of the input.  In the given instance of the problem we have $q_k>0$ for all $k$, but we construct additional instances in which it is convenient to allow zero resource supplied in some periods.  

A solution of the problem is feasible if one job is assigned to only one of the machines for the entirety of the continuous processing time of that job and no other job should be assigned to that machine during this time, so there is no overlap of time slots of jobs assigned to a common machine (this is the standard non-preemptive requirement). Also, the necessary quantity of the resource required by the jobs should be supplied, thus  for every point of time $t$, the inequality   
\begin{align*}
	\sum_{j:j\text{ starts strictly before }t}d_j
	\leq 
	\sum_{k:u_k < t} q_k ,
\end{align*}
should be also satisfied for feasibility, so the resource is common for all machines.

The parameters mentioned above define an input instance $I$ for the problem. All these parameters  are rational numbers. 

The objective function considered is the minimization of the convex combination of the makespan and the sum of the $\phi^{th}$ powers of the load of the machines, where $\psi\in[0,1]$ is the coefficient of the convex combination, and $\phi>1$. Given the possibility of idle time, the load of a machine is the time of completion of the last job assigned to the machine. That is, the objective value of a schedule $\sigma$ is defined as
\begin{align*}
	obj(\sigma) = \psi\max_i\Lambda_i +(1-\psi)\sum_{i}\Lambda_i^\phi \ .
\end{align*}
where $\Lambda_i $ is the load of machine $i$ in $\sigma$, and $obj$ denotes the objective function.
The sum of the $\phi^{th}$ power of the completion times of the machines is equivalent to the $l_{\phi}$-norm of the completion times of the machines that is a standard objective in the load balancing literature. Define \emph{norm-cost} of a schedule $\sigma$ as 
\[
\Phi(\sigma) \coloneqq \sum_{i=1}^{m}\Lambda_i^\phi\ .
\] 
The objective considered in this work is a unified generalization of the makespan minimization and the $l_{\phi}$-norm of the loads so we consider the two extreme cases $\psi=0$ and $\psi=1$ as the most interesting cases of this objective.  For $\phi\leq 1$ and $\psi=0$, the optimization problem of finding a minimum cost solution where there is a unique replenishment date at time zero is solved easily by allocating all jobs to one of the fastest machines and thus in this work we consider the problem assuming that $\phi>1$.
Problem \prob\ is to find a job assignment function $\sigma$ which assigns a job $j$ to a machine $i$ and a starting time on the machine that is a feasible solution so that $obj$ is minimized.  Our result is an EPTAS for \prob.

\paragraph{Definitions of approximation algorithms.}
A $\rho$-approximation algorithm for a minimization problem is a polynomial time algorithm that always finds a feasible solution of cost at most $\rho$ times the cost of an optimal solution. A polynomial time approximation scheme (PTAS) for a given problem is a family of approximation algorithms such that the family has a $(1+\varepsilon)$-approximation algorithm for any $\varepsilon>0$. An efficient polynomial time approximation scheme (EPTAS) \cite{Cesati97,Downey99,Flum2006} is a PTAS whose time complexity is upper bounded by the form $f(\frac{1}{\varepsilon}) \cdot poly(n)$ where $f$ is some computable (not necessarily polynomial) function and $poly(n)$ is a polynomial of the length of the (binary) encoding of the input. A fully polynomial time approximation scheme (FPTAS) is defined like an EPTAS, with the added restriction that $f$ must be upper bounded by a polynomial in $\frac 1{\varepsilon}$.    Note that our problem generalizes the standard minimum makespan on identical machines that is known to be strongly NP-hard, and thus our problem does not admit an FPTAS (unless P=NP). Hence, an EPTAS is the fastest scheme that can be established for \prob.

\paragraph{Notation and preliminaries.}
In what follows machines are usually specified by the index $i, i=1,2,\ldots,m$, jobs are specified by the index $j, j=1,2,\ldots,n$, and time periods are specified by the index $k, k=1,2,\ldots,e$. Furthermore, let $J$ be the set of all jobs, $M$ be the set of all machines, and $m(s)$ denote the number of machines with speed $s$.
Without loss of generality, $m \leq n$ because otherwise the $m-n$ slowest machines can be removed from the instance. Let $\eps >0 $ be such that $1/\eps$ is an integer, and our goal is to find an approximation algorithm of approximation ratio $1+\eps$ with the required time complexity bound.  Based on standard scaling (of $\eps$) it suffices to exhibit an algorithm that runs in the required time complexity and returns a feasible solution of cost at most $(1+\eps)^c$ times the optimal cost for an arbitrary constant $c >0$.  Note that if we modify a solution for \prob\ so that the new load of every machine is at most $\oeps$ times its old load then the makespan of the new solution is at most $\oeps$ time the old makespan and the norm-cost of the new solution is at most $(\oeps)^{\phi}$ times the old norm-cost.

\paragraph{Previous studies of scheduling problems with non-renewable resources.}
In \cite{LDS,Herr16,Belkaid12} there are studies motivating the setting of non-renewable resources in scheduling problems like problems arising in steel production or in order picking in a platform with a distribution company. 
Slowiński \cite{SLOWINSKI1984} and Carlier and Kan \cite{CARLIER1982} were among the first to study such problems  and considered minimizing the resource consumed as an objective but the former also added the makespan minimization objective. Grigoriev et al. \cite{Grigoriev2005} considered single machine scheduling with single and multiple resources and established a constant ratio approximation algorithm for makespan minimization and maximum lateness minimization. The settings of non-renewable resources was considered in other studies like \cite{Gyorgyi2014,KIS2015,GYORGYI2015a,Gyorgyi2015b,berczi2020,Gyrgyi2017,gyorgyi2019}.  Last and most relevant to our work, Györgyi \cite{GYORGYI2017604} derived a PTAS for parallel machine scheduling with single non-renewable resource and makespan objective on identical machines.  This PTAS is not an EPTAS, and thus our work both improves the time complexity of the known scheme for this problem to an EPTAS and furthermore it generalizes the machine settings to uniformly related machine and the objective to include also the sum of the $\phi$-powers of machines' completion times that was not considered before in the settings of non-renewable resource.  Furthermore, prior to our work there was no approximation algorithm in the literature for minimizing the makespan on uniform machines with non-renewable resource. 

\paragraph{Earlier approximation schemes for special cases of the problem without non-renewable resources.}
A special case of \prob\ is when there are no resource consuming jobs with parallel machines (this is the situation when the entire quantity of the resource is available to all jobs is released at time $0$). For this machine setting Jansen \cite{Jansen10} showed an EPTAS for the makespan minimization objective improving the seminal PTAS developed by Hochbaum and Shmoys \cite{Hochbaum88}. Epstein and Levin \cite{epstein2014} showed an EPTAS for $\ell_p$-norm minimization problem (see also \cite{EL14} for an alternative approach leading to an EPTAS for this problem), and Kones and Levin \cite{Kones2019} established an EPTAS for the problem of load balancing with the objective that generalizes the one considered here. 

The problem of scheduling on identical machines is a special case of uniformly related machines when speed of all machines are equal. Hochbaum and Shmoys \cite{Hochbaum87} showed an EPTAS (see also \cite[Chapter~9]{hochbaum97}) for the makespan minimization objective for this setting. Alon et al. \cite{Alon89} showed an EPTAS for $\ell_p$-norm minimization. Jansen et al. \cite{JKV16} improved the time complexity of the approximation schemes of the earlier results for makespan minimization on identical machines and uniformly related machines and also for $\ell_p$-norm of the vector of machines loads on identical machines.

Lenstra et al. \cite{Lenstra90} showed that for the more general settings of unrelated machines unless P=NP there is no approximation algorithm with an approximation ratio less than $1.5$ for the makespan minimization objective. They also presented a $2$-approximation algorithm to this classical scheduling problem. 

\paragraph{Outline of the scheme.}
We apply geometric rounding to the parameters of the input (Section \ref{rounding}) followed by a guessing step (Section \ref{guessing}) to guess partial information on the optimal  solution. This guessing step overcomes the impossibility to use the dual approximation method for our objective that is not a bottleneck objective, and it also reflects the special properties of our problem.  It is based on a careful characterization of a subset of solutions containing a near optimal solution (Section \ref{sec:char}).  The so-called configuration mixed-integer linear program, MILP, (Section \ref{confs}) is based on configurations  in addition to the guessed information and we solve the MILP to find an optimal solution of this mathematical program. The optimal solution of the MILP is found in polynomial time using \cite{kannan1983,lenstra1983} since the number of integer variables is a constant.  For every possible value for the guessed information, we solve the MILP, and the least cost solution among all solutions to the MILP is converted to a feasible schedule for the scheduling problem whose cost is approximately the cost of the optimal solution of the MILP (see Section \ref{sec:last}).  

\section{Rounding}\label{rounding}
Our scheme starts by applying standard geometric rounding of the input parameters stated below. We apply rounding of the job sizes, of the machine speeds (and as a result also the processing times), and of the replenishment dates.  The rounding is carried out as follows.
For every job $j$, the size of $j$ is rounded up to the nearest integer power of $(\oeps)$, the speed of every machine $i$ is rounded down to the nearest integer power of $(\oeps)$, and for the replenishment dates we first add $\eps \cdot p_{\min}$ to every replenishment date and then we round up to the nearest integer power of $(\oeps)$ where $p_{\min} =\min_j p_j >0$ is the minimum size of a job in the original instance.
The speed of the fastest machine remains $1$ after the rounding since 1 is an integer power of $\oeps$.

That is,
$$ p'_j=(\oeps)^{\lceil \log_{(\oeps)}p_j\rceil}\  \ \ \  s'_i = (\oeps)^{\lfloor \log_{1+\eps}s_i \rfloor}\  \ \ \mbox{and   } \ \ \  u'_k = (\oeps)^{\lceil \log_{(\oeps)}(u_k + \eps p_{\min} )\rceil} \ .$$

Note that as a result of the rounding the processing time of job $j$ on machine $i$ becomes $p'_{ij}$ and satisfies that 
\begin{equation} p'_{ij}=\frac{p'_j}{s'_i} \in \left[ \frac{p_j}{s_i},(\oeps)^2\frac{p_j}{s_i} \right) .\label{rounded_processing_time} \end{equation}
We let $I'$ be the rounded instance.

Let $\alpha$ be the smallest integer such that $(\oeps)^\alpha \geq u'_1\ ,$ and $\beta$ be the smallest integer such that $(\oeps)^\beta \geq u'_e\ , $ and let $\mu=\beta-\alpha$ be the number of time periods in $I'$ where a period is between two consecutive replenishment dates even if the amount of resource supplied in some of these periods (i.e., at the starting time of a period) are perhaps zero.  Then  
\begin{align*} \mu = \beta - \alpha =&{\lceil \log_{(\oeps)}(u_e + \eps p_{\min} )\rceil}  - {\lceil \log_{(\oeps)}(u_1 + \eps p_{\min} )\rceil}  + 1\\
	\leq& \log_{(\oeps)}(u_e + \eps p_{\min} ) - \log_{(\oeps)} \eps p_{\min} + 2	\leq \log_{(\oeps)} \frac{u_e}{\eps p_{\min}} + 3\ ,
\end{align*}
which is upper bounded by a polynomial in the input encoding length when $\eps$ is fixed. 

Last, we note that if there are different replenishment dates that are identical (in $I'$)  we combine identical replenishment dates into one date by summing up the replenished quantity of the dates that are combined.
There are at most $\mu$ replenishment dates. 

With a slight abuse of notation, let $u'_k$ be the $k^{th}$ replenishment date after the above rounding. The resource supplied at time period $k$, in increasing order of $k$, is given by
$q'_k = \sum_{v:u_v\leq u'_k}q_v-\sum_{v:u'_v < u'_k}q'_v $.

Next, we prove that we can assume without loss of generality that the input to our problem is rounded.
\begin{proposition} \label{rounded_instance_solution}
Any schedule $\sigma$ feasible to instance $I$, of makespan $C_{max}$ and norm-cost $\Phi(\sigma)$, can be used to generate a schedule $\sigma'$ which is feasible to the rounded instance $I'$ with makespan at most $(\oeps)^3 C_{max}$ and $\Phi(\sigma')\leq(\oeps)^{3\phi} \Phi(\sigma)$.  Any schedule $\sigma'$ feasible to the rounded instance $I'$, of makespan $C'_{max}$ and norm-cost $\Phi(\sigma')$, is feasible to the original instance $I$ with makespan at most $C'_{max}$ and norm-cost at most $\Phi(\sigma')$.
\end{proposition}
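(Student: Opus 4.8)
\noindent\emph{Proof plan.} The plan is to treat the two implications separately and, in both, to keep the same assignment of jobs to machines while only adjusting (or leaving unchanged) the start times; the makespan and norm-cost bounds then follow termwise from bounds on each machine's load, so the only delicate point is the resource constraint. As a preliminary I would record the telescoping identity $\sum_{\ell:\,u'_\ell\le u'_k}q'_\ell=\sum_{v:\,u_v\le u'_k}q_v$, proved by induction on $k$ straight from the definition of $q'_k$; together with the fact that every original replenishment date strictly precedes its rounded image (forced by the additive $\eps p_{\min}$ added before rounding), this identity drives both resource arguments.

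For the implication from $I'$ to $I$, I would take $\sigma'$ and keep every job on its machine with the same start time, now reading processing times as $p_j/s_i$. Since $p_j/s_i\le p'_{ij}$ by \eqref{rounded_processing_time}, shrinking processing times with start times fixed cannot create an overlap, so non-preemption is preserved, and no job's completion time — hence no machine's load — increases, giving makespan at most $C'_{max}$ and norm-cost at most $\Phi(\sigma')$. For the resource constraint, the telescoping identity shows $\sum_{k:\,u'_k<t}q'_k\le\sum_{k:\,u_k<t}q_k$ for all $t$ (the left side equals the $q$-mass released in $I$ up to the largest new date below $t$, and that date is below $t$), and feasibility of $\sigma'$ for $I'$ then implies feasibility for $I$.

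For the implication from $I$ to $I'$, given $\sigma$ with start times $t_j$ I would define $\sigma'$ with the same assignment and $t'_j\coloneqq(\oeps)^2\,(t_j+\eps p_{\min})$, and verify three things. First, no overlaps: if jobs $j,j'$ share a machine $i$ and $t_j+p_j/s_i\le t_{j'}$ in $\sigma$, then using $p'_{ij}<(\oeps)^2 p_j/s_i$ we get $t'_j+p'_{ij}<(\oeps)^2(t_j+p_j/s_i+\eps p_{\min})\le t'_{j'}$. Second, resource feasibility for $I'$: the map $t\mapsto(\oeps)^2(t+\eps p_{\min})$ is strictly increasing, so starts keep their order, and for any original date $u_v\le t_j$ its rounded image is $<(\oeps)(u_v+\eps p_{\min})\le(\oeps)(t_j+\eps p_{\min})\le t'_j$; hence, by the telescoping identity, at least as much resource is available at time $t'_j$ in $I'$ as is available at time $t_j$ in $I$, which is exactly what the feasibility of $\sigma$ guarantees is enough. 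Third, the objective bounds: the last job on machine $i$ now finishes by $(\oeps)^2\Lambda_i+(\oeps)^2\eps p_{\min}$, and since a machine that is nonempty in $\sigma$ carries some job $j$ with $\Lambda_i\ge p_j/s_i\ge p_{\min}$ (all speeds are at most $1$) while empty machines stay empty, this is at most $(\oeps)^3\Lambda_i$; so $\Lambda'_i\le(\oeps)^3\Lambda_i$ for every $i$, whence the makespan is at most $(\oeps)^3 C_{max}$ and, raising to the power $\phi$ and summing, $\Phi(\sigma')\le(\oeps)^{3\phi}\Phi(\sigma)$.

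The step I expect to need the most care is the resource bookkeeping under the rounding of the replenishment dates: the rounded dates are later than the originals, so one cannot simply scale time, and it is precisely the additive shift by $\eps p_{\min}$ — which appears both inside each $u'_k$ and inside $t'_j=(\oeps)^2(t_j+\eps p_{\min})$ — that lets the factor $(\oeps)^2$ absorb a replenishment date even when $t_j$ (or $u_1$) is tiny or zero. Once the transformation is fixed, the overlap check and the load bounds are routine.
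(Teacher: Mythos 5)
Your proof is correct and follows essentially the same route as the paper: in both directions the job-to-machine assignment is kept, the $I'\to I$ direction reuses the original start times (rounding only helps there), and the $I\to I'$ direction stretches time by a factor of roughly $(\oeps)^3$ so that the enlarged processing times and the delayed replenishment dates both fit. The only cosmetic differences are that you dilate start times via the affine map $t\mapsto(\oeps)^2(t+\eps p_{\min})$ whereas the paper scales completion times by $(\oeps)^3$ and extracts the additive $\eps p_{\min}$ slack from $\eps C_j\ge \eps p_{\min}$, and that you make the resource bookkeeping explicit through the telescoping identity for the $q'_k$, which the paper leaves implicit.
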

\begin{proof}
Consider first a schedule $\sigma'$ feasible to the rounded instance $I'$.  We use the same assignment of jobs to machines and we start every job $j$ at the starting time of $j$ in $\sigma'$.  Since machines' speeds are rounded down and job sizes are rounded up, the resulting schedule for instance $I$ is a feasible non-preemptive schedule, and it satisfies the resource requirement constraints as the replenishment dates were rounded up (so in $I$ the resource is available not later than it is available in $I'$).  Since the completion time of every job $j$ in $I$ is not later than its completion time in $I'$, the claim regarding the makespan and norm-cost holds.

Consider the other direction and let $\sigma$ be a feasible schedule to $I$. We define the schedule $\sigma'$ by assigning each job $j$ to the same machine that $\sigma$ is using to process $j$.  We still need to define the starting time of every job.  If $\sigma$ used to complete the processing of job $j$ at time $x$, then in $\sigma'$ we complete processing job $j$ at time $(\oeps)^3 \cdot x$.  Note that the claim regarding the makespan and the norm-cost of the resulting schedule follows immediately.  It suffices to show that this is indeed a feasible schedule. 

First, assume that $j'$ was processed (in $\sigma$) after job $j$ on a common machine $i$. Let $C_j$ and $C_{j'}$ be the completion time in $\sigma$ of $j$ and $j'$, respectively.  Then, $C_{j'} \geq C_j +p_{ij'}$ so the completion time of $j'$ in $\sigma'$ satisfies $(\oeps)^3 \cdot C_{j'} \geq (\oeps)^3 C_j +(\oeps)p_{ij'} \geq (\oeps)^3 C_j +p'_{ij'}$ and therefore the schedule is a non-preemptive schedule.    

Next we verify the availability of resource to process the jobs according to $\sigma'$.  Assume that in $\sigma$ the processing of job $j$ on machine $i$ depends on  the resource supplied at point $u_k$ so the starting time of $j$ was at least $u_k$ and its completion time was at least $C_j \geq u_{k}+p_{ij}$.  In $\sigma'$ we have the new completion time $(\oeps)^3 \cdot C_j \geq \eps  C_j + (\oeps)^2 C_j \geq \eps p_{\min} + (\oeps)^2 u_k +  p'_{ij}$.  Therefore, the new starting time of $j$ on machine $i$ in $\sigma'$ is not smaller than $\eps p_{\min} + (\oeps)^2 u_k $  so indeed the $k$-th supply point after the rounding is not earlier than the starting time of job $j$.  Therefore, the resource that $j$ uses to start its processing is available at this new starting time.
\qed \end{proof}

Using the last proposition, we assume without loss of generality that the original instance is already rounded accordingly so with a slight abuse of notation we let $p_j$, $s_i$, $p_{ij}$, $u_k$ and $q_k$ be the (rounded) size of a job $j$, the (rounded) speed of a machine $i$, the (rounded) processing time of job $j$ on machine $i$, the $k^{th}$ (rounded) replenishment date and resource supplied at $u_k$ respectively and last we assume that $I$ is the (rounded) input instance.   In the next sections we provide an EPTAS for rounded instances of \prob.

\section{Characterization of near-optimal solutions\label{sec:char}}
In the known approximation schemes for the makespan minimization objective (for various machine models) one can apply the standard step of guessing the optimal makespan using binary search  and then discard loads that are sufficiently smaller than this guessed value.  Since our objective is different, we are not able to discard small loads and we are not able to use binary search for finding the approximate makespan. Thus we need another tool to characterize near optimal solution.  In this section we provide the necessary characterization for our scheme.

\begin{lemma}\label{makespan_on_fast}
There is a feasible solution whose objective function value is $(\oeps)^{\phi}$ times the optimal cost satisfying that the makespan is attained on a machine of speed at least $\eps^2$.
\end{lemma}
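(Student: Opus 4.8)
The plan is to start from an arbitrary optimal schedule $\sigma^*$ of cost $\opt$ and modify it so that the machine achieving the makespan has speed at least $\eps^2$, while paying only a factor of $(\oeps)^{\phi}$ in the objective. Let $i^*$ be the machine attaining the makespan in $\sigma^*$, and let $C_{max}=\Lambda_{i^*}$ be its load. If $s_{i^*}\ge \eps^2$ we are already done, so assume $s_{i^*}<\eps^2$. The intuition is that a machine this slow cannot be the makespan machine in any reasonable solution: the total load on the (at least one) fastest machine — which has speed $1$ — must already be comparable to $C_{max}$, because of the resource replenishment constraints that force jobs to wait, and once we have one fast machine with load close to $C_{max}$ we can make it the makespan machine cheaply.

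First I would argue that there exists a machine of speed at least $\eps^2$ whose load in $\sigma^*$ is at least $(1-\eps)C_{max}$, or more precisely at least $C_{max}$ up to a bounded multiplicative loss. The key observation is that the last job $j^*$ completing on $i^*$ has processing time $p_{j^*}/s_{i^*}$, and either (a) this job could have been scheduled on a fast machine to start at essentially the same time (so its completion time there is only a factor $s_{i^*}\le 1$ smaller, hence still large relative to what a fast machine would need), or (b) the job started late because of a resource replenishment date $u_k$, in which case $C_{max}\le u_k + p_{j^*}/s_{i^*}$ and that same replenishment constraint delays things globally. The cleaner route is: take a fastest machine $i_1$ (speed $1$) and consider its load $\Lambda_{i_1}$ in $\sigma^*$. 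I would move the last job $j^*$ of machine $i^*$ onto $i_1$, appending it at the end. Its processing time on $i_1$ is $p_{j^*}\le s_{i^*}\cdot C_{max}<\eps^2 C_{max}$, so the new load of $i_1$ is at most $\Lambda_{i_1}+\eps^2 C_{max}$, and this job only needs resource that was already available by time $C_{max}-p_{j^*}/s_{i^*}\le C_{max}$, hence available by the new start time on $i_1$ (which is $\Lambda_{i_1}$; here I must check $\Lambda_{i_1}\ge$ the relevant $u_k$, which should follow because $i_1$ was running until time $\Lambda_{i_1}$ and if $\Lambda_{i_1}<C_{max}$ there is room). Meanwhile machine $i^*$'s load drops, and I repeat/peel jobs off slow machines as needed.

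The main technical obstacle — and the step I expect to be the heart of the argument — is controlling the feasibility of the resource constraint after relocation, and ensuring that the makespan really does transfer to a fast machine without blowing up. The subtlety is that the slow machine $i^*$ may genuinely carry a lot of load that *cannot* be appended to a fast machine at time $\Lambda_{i_1}$ because the resource for those jobs is not yet released; in that case, however, the same release times bound $C_{max}$ from below in terms of $u_e$ and $p_{\min}$, and one should instead observe that the load of *every* machine that is still busy at time $C_{max}$ is already $\ge$ some common threshold, so rescaling all loads by $\oeps$ (as permitted by the remark in the preliminaries, costing $(\oeps)^{\phi}$ on the norm-cost and $\oeps$ on the makespan) creates enough slack to push one unit of work from $i^*$ onto a fast machine and have that fast machine become the makespan machine. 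I would package this as: let $\tau$ be the makespan of the modified-by-relocation schedule restricted to fast machines; show $\tau\ge (1-O(\eps))C_{max}$ using the above case analysis; then uniformly scale so the fast-machine makespan dominates. Throughout, the bound $s_{i^*}<\eps^2$ is used precisely so that the relocated piece $p_{j^*}/s_{i^*}$'s *size* $p_{j^*}$ is a negligible $\eps^2$-fraction of $C_{max}$, which is what keeps the fast machine's load increase within the $(\oeps)$-factor budget. Finally I would note the objective bookkeeping: the makespan of the new schedule is at most $(\oeps)C_{max}$ and its norm-cost at most $(\oeps)^{\phi}$ times the old, so $obj$ increases by at most a factor $(\oeps)^{\phi}$ (since $\phi>1$, the $(\oeps)^{\phi}$ factor on the norm term dominates the $(\oeps)$ factor on the makespan term), giving the claimed bound.
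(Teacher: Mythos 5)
Your high-level idea coincides with the paper's: if the makespan machine $i^*$ has speed below $\eps^2$, relocate its work to a speed-$1$ machine, using the fact that the total \emph{size} of the jobs on $i^*$ is at most $s_{i^*}\Lambda_{i^*}<\eps^2 C_{max}$, so the receiving machine's load grows only negligibly. However, there is a genuine gap in how you handle the resource constraint, and it is exactly the step you yourself flag as ``the heart of the argument'' without resolving it. You propose to start the relocated job at time $\Lambda_{i_1}$ on the fast machine; this can be strictly earlier than the job's original start time, and the resource it consumes may simply not have been released yet (e.g., $\Lambda_{i_1}$ tiny while the job's governing replenishment date $u_k$ is close to $C_{max}$). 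Your justification (``$i_1$ was running until $\Lambda_{i_1}$ and if $\Lambda_{i_1}<C_{max}$ there is room'') does not address this, and the fallback of stretching all loads by a factor $(\oeps)$ creates only $\eps\Lambda_{i_1}$ of slack, which need not bridge the gap to $u_k$. Likewise, the intermediate claim that some fast machine already has load $\tau\geq(1-O(\eps))C_{max}$ is neither proved nor needed.

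The paper closes this gap with two devices you are missing. First, it transfers \emph{all} jobs of $i^*$ to machine $1$ at once, appending them after $\Lambda_1$ in their original order subject to the rule that no job starts earlier than it did on $i^*$, inserting idle time where needed; this makes resource feasibility immediate. Second, it chooses the optimal schedule $\sigma$ extremally (among optimal schedules, minimizing the number of slow machines whose load exceeds that of machine $1$) and case-splits on whether the new load $\Lambda'_1$ exceeds $\Lambda_{i^*}$: if not (in particular whenever idle time was inserted), the cost did not increase while the count of offending slow machines dropped, contradicting the extremal choice; if so, no idle time was inserted, the added load is exactly the total size $\leq\eps^2\Lambda_{i^*}$, machine $1$ now attains the makespan, and the $(\oeps)^{\phi}$ bound follows. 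Without the extremal choice, your ``repeat/peel'' loop has no termination or cost-accumulation argument. Your objective bookkeeping at the end is fine, but the feasibility step must be repaired along these lines for the proof to stand.
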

\begin{proof}
Let $\sigma$ be an optimal solution and among all optimal solutions, $\sigma$ satisfies that the number of machines of speed less than $\eps^2$ having load larger than that of machine $1$ is minimized. If no such machine exists, the claim holds for $\sigma$.  Assume that there are such machines  of speed less than $\eps^2$ having load larger than that of machine $1$ and if the claim does not hold for $\sigma$ then at least one of those machines attains the makespan.

Fix a machine $\mathcal{i}$ that attains the makespan so it has a load larger than the load of machine $1$, where $\mathcal{i}$ is a machine of speed smaller than $\eps^2$. Perform the following transformation to $\sigma$. Transfer the jobs from machine $\mathcal{i}$ to machine $1$ to run after time $\Lambda_1$ in the same order of jobs (as they used to be processed on $\mathcal{i}$) such that the starting time of each job on machine $1$ is at least the starting time of the job on machine $\mathcal{i}$ in $\sigma$. In order to satisfy this condition add idle time where necessary. This transformation is feasible because when the jobs from machine $\mathcal{i}$ are transferred to machine $1$ the time period in which a job is processed on machine $1$ is not before the time period in which it was processed on $\mathcal{i}$. 

Let the resulting schedule be $\sigma'$. Let the load of machine $1$ and $\mathcal{i}$ in $\sigma$ be $\Lambda_1$ and $\Lambda_{\mathcal{i}}$ respectively, and the new load of machine $1$ and $\mathcal{i}$ in $\sigma'$ be $\Lambda'_1$ and $\Lambda'_{\mathcal{i}}=0$ respectively. This transformation can result in two cases. 

First, assume that $\Lambda'_1 \leq \Lambda_{\mathcal{i}} $. Observe that if during the transformation we added idle time to satisfy the condition on the time periods of the transferred jobs, then this case applies. Then, the new makespan is not larger than the makespan of $\sigma$, and furthermore, the difference in norm-cost between the new schedule and old schedule is
${\Lambda'_1}^\phi - \left(\Lambda_1^\phi +\Lambda_{\mathcal{i}}^\phi\right) 
	\leq
	\Lambda_{\mathcal{i}}^\phi - \left(\Lambda_1^\phi +\Lambda_{\mathcal{i}}^\phi\right) 
	\leq 0 $. Thus,
the norm-cost is not increased as a result of the transformation.
Therefore, the cost of $\sigma'$ is not larger than the cost of $\sigma$ but the number of machines with speed smaller than $\eps^2$ with load larger than the load of machine $1$ is decreased. This means that $\sigma$ was not an optimal solution or that we get a contradiction to the choice of $\sigma$ among all optimal solutions. Hence this case is not possible.  

The second case occurs when the finishing time of the last job moved to machine $1$ is later than $\Lambda_{\mathcal{i}}$. That is 
$\Lambda_1 \leq \Lambda_{\mathcal{i}} < \Lambda'_1$.
After the transformation, machine $1$ attains the makespan in $\sigma'$ because the new load of machine $1$, $\Lambda'_1$, is larger than the old load of machine $\mathcal{i}$, $\Lambda_{\mathcal{i}}$, which was the machine that attained the makespan in $\sigma$ and the loads of all other machines remain unchanged. The total increase in load of machine $1$, is $s_{\mathcal{i}}\Lambda_{\mathcal{i}}$ which is at most $\eps^2\Lambda_{\mathcal{i}}$. Thus, the new makespan
$ \Lambda'_1 \leq \Lambda_1+\eps^2\Lambda_{\mathcal{i}} \leq \Lambda_{\mathcal{i}}+\eps^2\Lambda_{\mathcal{i}} \leq (\oeps)\Lambda_{\mathcal{i}}$.  
Moreover, 
\begin{align*}
\Phi(\sigma') &= \sum_{i=1}^{m}{\Lambda'_i}^{\phi} = {\Lambda'_{1}}^{\phi} + {\Lambda'_{\mathcal{i}}}^{\phi} + \sum_{i=2,i\not=\mathcal{i}}^{m}{\Lambda'_i}^{\phi}
\leq ((\oeps)\Lambda_{\mathcal{i}})^{\phi} + \sum_{i=2,i\not=\mathcal{i}}^{m}{\Lambda_i}^{\phi}\leq (\oeps)^{\phi}\Phi(\sigma)\ .
\end{align*}
Thus, the cost of $\sigma'$ is at most $(\oeps)^{\phi}$ times the cost of $\sigma$.
Thus, the schedule $\sigma'$ is a near optimal solution such that the makespan is attained on a machine of speed at least $\eps^2$.
\qed \end{proof}

A machine is said to be active for a time interval consisting of some time periods if the machine is processing a job (or a part of a job) during that time interval and it is idle outside of this time interval. Let $$\tilde{\mu} = \left\lceil\log_{(\oeps)}\frac{1}{\eps} +2\right\rceil$$ be a function of $\eps$, and note that $\tilde{\mu}$ is a constant once $\eps$ is fixed.  Our next goal is to show that we can restrict ourselves to schedules in which every machine is active only for a time interval consisting of a  constant number of consecutive time periods. We prove the next lemma by applying time stretching and moving the new idle time of a machine to be at the beginning of the time horizon. 

\begin{lemma}\label{active_periods}
	The feasible schedule $\sigma$ to the instance $I$ resulting from Lemma \ref{makespan_on_fast} can be converted into another schedule $\tilde{\sigma}$ in which first, every machine is active in a time interval consisting of at most $\tilde{\mu}$ consecutive time periods, second,  $obj(\tilde{\sigma})\leq(\oeps)^\phi obj(\sigma)$, and last, the makespan is attained on a machine of speed at least $\eps^2$.
\end{lemma}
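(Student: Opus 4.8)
The plan is to build $\tilde\sigma$ from $\sigma$ by one machine‑by‑machine rescheduling that forces every machine to perform all of its processing inside the last $\tilde\mu$ periods preceding its completion time; this is the ``time stretching, then move the newly created idle to the front'' operation, realised here simply by only ever moving jobs later in time. For a machine $i$, let $k_i$ be the index of the period containing $\Lambda_i$ in $\sigma$, let $j_1,\dots,j_r$ be the jobs on $i$ in the order $\sigma$ processes them (with $\sigma$-starting times $S_{j_1},\dots,S_{j_r}$), and set $\tau_i=(\oeps)^{\alpha+k_i-\tilde\mu+1}$, the left endpoint of period $k_i-\tilde\mu+2$. If $j_1$ starts at time $\ge\tau_i$ in $\sigma$, leave machine $i$ untouched; otherwise reschedule $j_1,\dots,j_r$ in the same order, greedily from the left, forbidding any start before $\tau_i$: $S'_{j_1}=\tau_i$ and $S'_{j_{t+1}}=\max\{S_{j_{t+1}},\,S'_{j_t}+p_{ij_t}\}$. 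The definition of $\tilde\mu$ is exactly what makes $\tau_i\le\eps\Lambda_i$, since it yields $(\oeps)^{\tilde\mu-2}\ge 1/\eps$.

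First I would check feasibility: no job ever starts earlier than it did in $\sigma$, so every resource inequality still holds, non‑overlap on each machine is built into the greedy rule, and the assignment of jobs to machines is unchanged. Then the objective bound: a one‑line induction on $t$ gives $S'_{j_t}\le S_{j_t}+(\tau_i-S_{j_1})$ for every $t$ (the ``push'' does not accumulate, because idle sitting in front of a job in $\sigma$ absorbs part of it), so the new load of $i$ lies in $[\Lambda_i,\ \Lambda_i+\tau_i)\subseteq[\Lambda_i,\ (\oeps)\Lambda_i)$. Since every load grows by a factor strictly below $\oeps$, the makespan grows by a factor below $\oeps$ and the norm‑cost by a factor below $(\oeps)^\phi$, hence $obj$ by a factor below $(\oeps)^\phi$. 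For the period count: an untouched machine is active inside $[S_{j_1},\Lambda_i]$ with $S_{j_1}\ge\tau_i$, meeting at most the $\tilde\mu-1$ periods $k_i-\tilde\mu+2,\dots,k_i$; a rescheduled machine is active inside $[\tau_i,\ \Lambda_i+\tau_i)$, and $\Lambda_i+\tau_i<(\oeps)^{\alpha+k_i+1}$, so its completion may slip into period $k_i+1$ but no further — at most the $\tilde\mu$ periods $k_i-\tilde\mu+2,\dots,k_i+1$.

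The delicate requirement is keeping the makespan on a machine of speed $\ge\eps^2$. Because loads only increased, a slow machine whose $\sigma$-load was within an $\oeps$ factor of the overall maximum may now be the heaviest. To repair this, let $i^\star$ be the machine attaining the makespan in $\sigma$ (speed $\ge\eps^2$) and $C$ the makespan after the reschedule; if $C$ is not already attained on a machine of speed $\ge\eps^2$, pad $i^\star$ with idle time at the front until its load equals $C$. Feasibility is preserved (only front idle is added); $i^\star$ now attains the makespan; the period bound survives because the padded load is still $<(\oeps)\Lambda_{i^\star}$ and hence stays within period $k_{i^\star}+1$, while the first job of $i^\star$ is still at a time $\ge\tau_{i^\star}$; and the objective bound survives because the new makespan is $C<(\oeps)\Lambda_{i^\star}\le(\oeps)\max_i\Lambda_i$ and the norm‑cost term $C^\phi$ can be charged against $\bigl((\oeps)\Lambda_{i^\star}\bigr)^\phi$, using that $\Lambda_{i^\star}$ is the largest load in $\sigma$.

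I expect the main obstacle to be precisely this three‑way balancing — at most $\tilde\mu$ consecutive active periods per machine, objective within $(\oeps)^\phi$, and makespan on a fast machine — under a $\tilde\mu$ that leaves no slack to lose even a single extra period, so that each of the three properties must be re‑verified both after the main reschedule and after the corrective padding of $i^\star$, while carrying the resource‑feasibility bookkeeping along the way.
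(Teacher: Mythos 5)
Your proposal is correct and follows essentially the same route as the paper: time‑stretch each machine by pushing idle time to the front so that its activity is confined to an interval of the form $[\,\approx\eps\Lambda_i,\ \approx(\oeps)\Lambda_i\,]$, which by the choice of $\tilde{\mu}$ spans at most $\tilde{\mu}$ consecutive periods, with feasibility preserved because starting times only increase. The one divergence is that the paper sets every new load to \emph{exactly} $(\oeps)\Lambda_i$, so the set of machines attaining the makespan is unchanged and no repair is needed, whereas your anchoring at the period boundary $\tau_i$ increases loads non‑uniformly and forces the extra padding of $i^\star$ — which you carry out and verify correctly, at the cost of re‑checking all three properties once more.
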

\begin{proof}	
We apply the following process on every machine $i$.  We first change the load of $i$ and then schedule the jobs accordingly.  Assume that the load of $i$ in $\sigma$ is $\Lambda_i$, then the new load of $i$ in $\sigma'$ will be exactly $(\oeps) \Lambda_i$ and we construct the schedule by adding idle time of length $\eps \Lambda_i$ at the beginning of the time horizon, so the active time interval of $i$ will start not earlier than $\frac{\eps \Lambda_i}{\oeps}$ and ends not later than $(\oeps)^2 \Lambda_i$. $\sigma'$ is still feasible  because the starting time of the jobs does not decrease and this ensures that there is enough resource available to start processing the job.  After this additional idle time, the schedule of $i$ will be exactly as in $\sigma$ (while delaying all jobs assigned to $i$ by this idle time).
Thus for any machine $i$, that has jobs assigned to it, $i$ has a new load of exactly $(\oeps)\Lambda_i$. So the claim regarding the cost of the solution as well as the existence of a machine of speed at least $\eps^2$ whose load is the (new) makespan is satisfied (the set of machines attaining the makespan is not modified).  It remains to bound the number of consecutive time periods contained in the active time interval of $i$. Noting that the number of consecutive time periods in this time interval is smaller by $1$ from the number of integer powers of $\oeps$ between $\frac{\eps \Lambda_i}{\oeps}$ and $(\oeps)^2 \Lambda_i$, we conclude that it is at most  $\lceil \log_{\oeps} \frac{(\oeps)^2 \Lambda_i}{\frac{\eps \Lambda_i}{\oeps}} -1\rceil = \left\lceil\log_{(\oeps)}\frac{1}{\eps} +2\right\rceil = \tilde{\mu}$.
\qed \end{proof}

Thus, for the succeeding discussions, it suffices to consider schedules that assign jobs to at most $\tilde{\mu}$ consecutive time periods on every machine and the makespan is attained on a machine of speed at least $\eps^2$. 

\section{Guessing}\label{guessing}

Let $\opt$ be a least cost solution among all the near optimal solutions that satisfy Lemma \ref{active_periods}, and such a feasible solution exists when the input instance has a feasible solution. Let the makespan value  of $\opt$ rounded down to an integer power of $\oeps$ be $\copt$ and the speed of the machine on which the makespan is achieved be $\sopt$. Then the makespan of all near optimal solutions that we would like to consider lies in $[\copt,(\oeps)\copt]$.   In the guessing step that we exhibit now we guess the pair $(\copt,\sopt)$.  

We {\em guess} both pieces of information regarding $\opt$. The term guess is used to mean that we perform the following steps of the algorithm for every possible value of this information and we output the feasible solution of minimal cost among all iterations that results in a feasible solution to \prob. We note that our solution (for a given value of the guess) need not satisfy this guessed information but we will use the existence of a near optimal solution corresponding to this guess.	 In the analysis of the approximation ratio of our scheme it suffices to consider the iteration of this exhaustive search in which we used the correct information regarding $\opt$.
We next verify that the number of possibilities of this guessed information (i.e., the number of iterations of the exhaustive enumeration) is upper  bounded by a polynomial in the input encoding length.

\begin{lemma}
The number of possibilities of the guessed information $(\copt,\sopt)$ is at most $n\left(\log_{(\oeps)}{n}+ 2\right)\cdot\left(1 - \log_{(\oeps)}\eps^2\right)$.
\end{lemma}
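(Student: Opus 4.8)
The plan is to exhibit, for each of the two guessed quantities, an explicit set of candidate values that provably contains the value used by $\opt$, and then to estimate the product of the two set sizes, since the guess is an ordered pair. Consider first $\sopt$. After the rounding every machine speed is an integer power of $\oeps$, and the fastest machine has speed $1=(\oeps)^0$, so every speed is $(\oeps)^t$ for an integer $t\le 0$; by Lemma~\ref{active_periods} the machine of $\opt$ on which the makespan is attained has speed at least $\eps^2$, hence $t\ge\log_{(\oeps)}\eps^2$. Thus $\sopt$ ranges over the integer powers $(\oeps)^t$ with $\lceil\log_{(\oeps)}\eps^2\rceil\le t\le 0$, a set of size $1-\lceil\log_{(\oeps)}\eps^2\rceil\le 1-\log_{(\oeps)}\eps^2$.

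Now consider $\copt$ for a fixed (correct) value of $\sopt$. Let $i^\star$ be the machine of $\opt$ attaining the makespan, so it has speed $\sopt$, and let $p^\star$ be the largest processing time of a job assigned to $i^\star$. Since every job on $i^\star$ has its size divided by the same speed $\sopt$ and there are at most $n$ distinct (rounded) job sizes, $p^\star$ equals $p_j/\sopt$ for one of at most $n$ jobs $j$, and $p^\star$ is itself an integer power of $\oeps$. On the one hand, the load of $i^\star$ equals $C_{max}(\opt)$ and is at least the processing time of any single job on it, so $C_{max}(\opt)\ge p^\star$, and since $p^\star$ is a power of $\oeps$ this gives $\copt\ge p^\star$. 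On the other hand $i^\star$ carries at most $n$ jobs, each of processing time at most $p^\star$, so the total processing on $i^\star$ is at most $n\,p^\star$; combining this with Lemma~\ref{active_periods}, which confines $i^\star$ to at most $\tilde{\mu}$ consecutive time periods and moves the avoidable idle of $i^\star$ to the front as a single block of length $\tfrac{\eps}{\oeps}C_{max}(\opt)$, together with the minimality of $\opt$ (which precludes any further avoidable idle on $i^\star$), one obtains $C_{max}(\opt)\le n(\oeps)\,p^\star$. Hence, for the fixed $\sopt$, the true $\copt$ lies in the union over $j$ of the integer powers of $\oeps$ in $\bigl[p_j/\sopt,\,n(\oeps)\,p_j/\sopt\bigr]$; each such interval contains at most $\log_{(\oeps)}\!\bigl(n(\oeps)\bigr)+1=\log_{(\oeps)}n+2$ powers of $\oeps$, so this union has at most $n(\log_{(\oeps)}n+2)$ elements. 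Multiplying by the at most $1-\log_{(\oeps)}\eps^2$ choices of $\sopt$ yields the claimed bound.

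The step I expect to be the main obstacle is the upper estimate $C_{max}(\opt)\le n(\oeps)\,p^\star$. Bounding the number of jobs on $i^\star$ by $n$ is immediate, but one must control the resource-forced idle time on $i^\star$ so that it does not inflate the load beyond the $(\oeps)$ slack times $n\,p^\star$. I would handle this using Lemma~\ref{active_periods} (which caps the active span of $i^\star$ at $\tilde{\mu}$ consecutive periods and eliminates avoidable idle) together with a left-shifting normalization of $\opt$ on $i^\star$ that respects resource availability; if necessary one first preprocesses the replenishment data so that no gap between two consecutive replenishment dates exceeds the total processing time of all jobs, which makes every remaining idle interval on $i^\star$ chargeable to genuine work and keeps the bound independent of the magnitudes of the $u_k$.
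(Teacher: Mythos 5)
Your proposal follows essentially the same route as the paper's proof: by Lemma~\ref{active_periods} the makespan machine has speed among the at most $1-\log_{(\oeps)}\eps^2$ integer powers of $\oeps$ in $[\eps^2,1]$, and for each such speed the makespan is anchored to the largest job on that machine, giving $n$ choices for that job times at most $\log_{(\oeps)}n+2$ powers of $\oeps$ in a window of multiplicative width $O(n)$. The step you single out as the main obstacle --- justifying the upper bound $C_{max}(\opt)\le n(\oeps)p^\star$ against resource-forced idle --- is precisely the containment the paper asserts without further argument (``$\copt$ lies in $[p_{max}/s_i,\,n\cdot(p_{max}/s_i)]$''), so your write-up is no less complete than the paper's own proof of this lemma.
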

\begin{proof}
From Lemma \ref{active_periods} the makespan is achieved on a machine of speed at least $\eps^2$, thus the speed of the machine on which makespan is achieved lies in $[\eps^2, 1]$. The number of possible distinct values for $\sopt$ is at most
$\log_{(\oeps)}1 - \left\lceil\log_{(\oeps)}\eps^2\right\rceil +1
\leq 1 - \log_{(\oeps)}\eps^2 $.
Let $i$ be the machine on which makespan is achieved. Then $\copt$ lies in $[{p_{max}}/{s_i}, n\cdot({p_{max}}/{s_{i}})]$, where $p_{max}$ is the largest rounded size among all the jobs assigned to machine $i$, and $s_{i}$ is rounded speed of machine $i$. Thus the number of possible distinct values for $\copt$ on machine $i$ for a given value of $p_{max}$ and $\sopt = s_i$ is at most
$\left\lceil\log_{(\oeps)}\left(n\cdot\frac{p_{max}}{s_{i}}\right)\right\rceil - \log_{(\oeps)}\left(\frac{p_{max}}{s_{i}}\right) + 1 \leq \log_{(\oeps)}{n}+ 2$,
and thus the total number of possible values we need to check for the pair of values $(\copt,\sopt)$ is at most
$n\left(\log_{(\oeps)}{n}+ 2\right)\cdot\left(1 - \log_{(\oeps)}\eps^2\right)$ as we claimed.
\qed \end{proof}

\section{The mixed-integer linear program}\label{confs}
Our scheme is based on solving a mixed-integer linear program (MILP) for every given value of the guessed information.  The MILP is based on configurations and assignment of jobs to configurations.  We first define some preliminaries definitions, then our notion of configurations, and we conclude this section presenting the MILP.

\paragraph{Huge, Big, and Small jobs.}
We classify jobs into huge, big, and small jobs with respect to the pair $(s,w)$, where $s$ is the speed of a machine and $w$ is its load. The formal definition applies for an arbitrary pair of positive reals. Let $\rho \geq 1$ be a constant that will be specified later.
A job $j$ is a {\em huge job} for a pair $(s,w)$ if ${p_j}/{s} > (\oeps)w$, is a {\em big job} for a pair $(s,w)$ if $(\oeps)w \geq {p_j}/{s} \geq \eps\pho w\ ,$ and is a {\em small job} for a pair $(s,w)$ if ${p_j}/{s} < \eps\pho w$.
Let $\Delta = \left\lceil\log_{(\oeps)} (\oeps)w\right\rceil$ and $\delta =\left\lceil \log_{(\oeps)} \eps\pho w\right\rceil$. Then, the number of distinct sizes for big jobs for a fixed pair $(s,w)$ is at most
$\lambda = \Delta - \delta + 1$,
and $\lambda$ depends only on $\eps$ (for a constant $\rho$).

\paragraph{Large and Small loads.}
Recall that $\copt$ is the guessed approximated value of makespan of a near optimal solution that satisfies Lemma \ref{active_periods}.
A load $w$ of a machine  is a {\em large load} if $w\geq \kappa \copt\ ,$ else it is a {\em small load}, where $\kappa$ is a function of $\eps$ which will be specified later. Thus a large load lies in $[\kappa \copt, (\oeps)\copt]$.  Therefore, given  a value of $\copt$, the number of distinct values of large loads is at most 
$\log_{(\oeps)}\frac{(\oeps)\copt}{\kappa \copt} + 1 = \log_{(\oeps)}\frac{1}{\kappa} + 2$
which is a constant for a fixed value of $\eps$ and $\kappa$.

\paragraph{Fast and slow speeds.}
Let $S$ be the set of all values for the rounded speed of machines.
A speed $s$ or a machine of speed $s$ is {\em fast} if $s\geq\kappa$, else it is {\em slow}, where $\kappa$ is the same function of $\eps$ we used to define large loads. 
The number of distinct values of fast speeds is at most
$\log_{(\oeps)} 1 -\log_{(\oeps)} \kappa + 1 \leq \log_{(\oeps)} \frac{1}{\kappa} + 2$
which is constant when $\eps$ is fixed.

\paragraph{Configuration.}
We use the term {\em configuration} as a compact and approximated representation of a schedule of one machine.  Each configuration $c$ is a vector with $\tilde{\mu}+2$ components (see Figure \ref{conf-fig} for an illustration). 

\begin{figure}[h]
	\centering
	\includegraphics[scale=0.2]{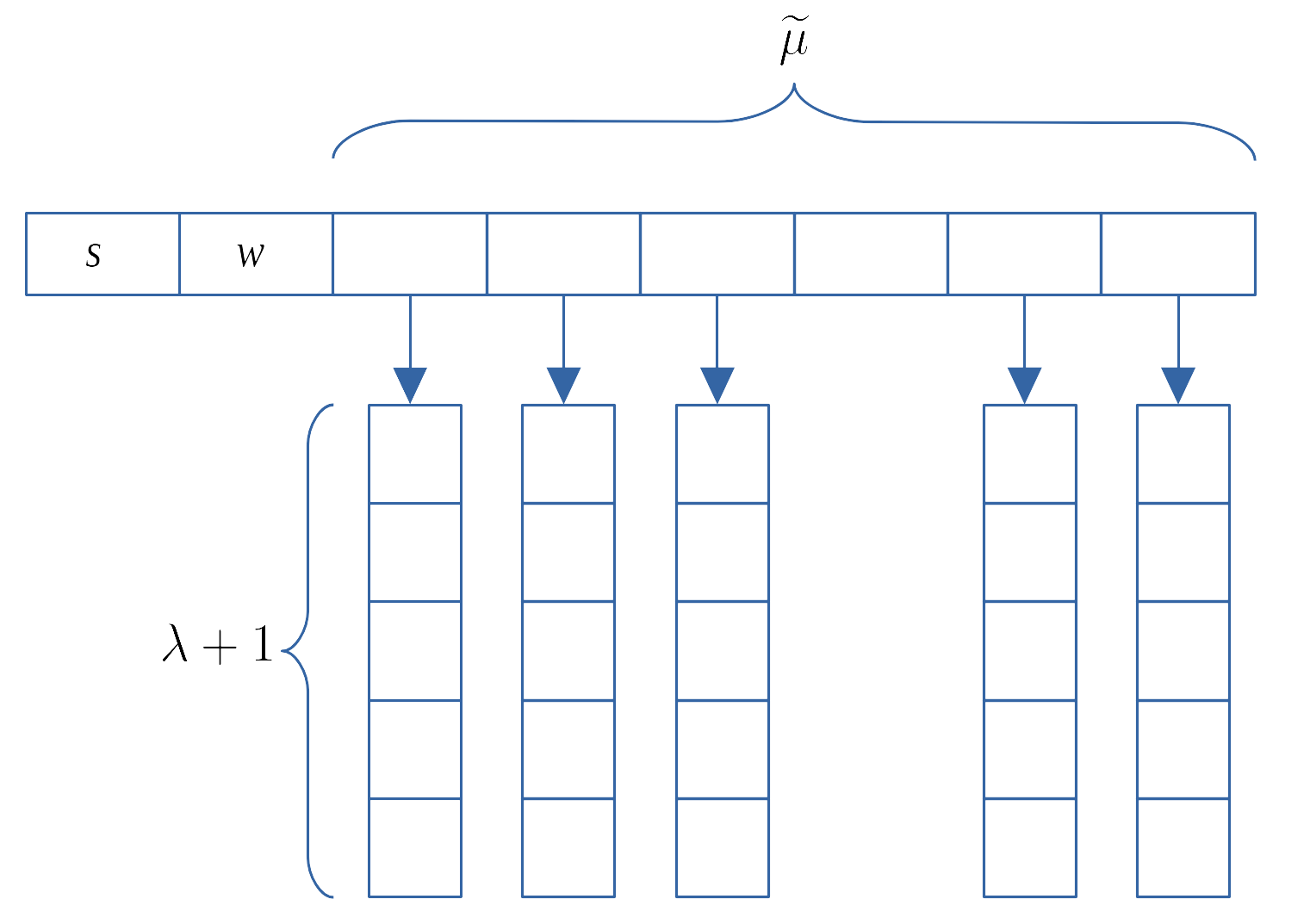}
	\caption{Structure of a configuration\label{conf-fig}}
\end{figure}

The first component stores the speed $s_c$ of the machine assigned configuration $c$. The second component, $w_c$, is the upper bound on the load (load bound) of a machine to which configuration $c$ is assigned, rounded up to an integer power of $(\oeps)$. Based on our characterization of near optimal solutions, a machine assigned $c$ will be active for at most $\tilde{\mu}$ time periods intersecting $[\eps w_c, (\oeps)w_c]$. 
Let $\Delta_c = \left\lceil\log_{(\oeps)} (\oeps)w_c\right\rceil$ and $\delta_c =\left\lceil \log_{(\oeps)} \eps\pho w_c\right\rceil$. The number of distinct sizes for big jobs for the configuration $c$ is at most
$\lambda_c = \Delta_c - \delta_c + 1$.

Let $\nu_c^l = \left\lfloor\log_{(\oeps)}\eps w_c\right\rfloor$ and let $\nu_c^u = \left\lceil\log_{(\oeps)}(\oeps) w_c\right\rceil$. Each of the remaining $\tilde{\mu}$ components of the configuration corresponds to time periods $k = \nu_c^l,\nu_c^l+1,\nu_c^l+2,\ldots,\nu_c^u$. 
For each period $k$ in this interval, there is an {\em associated subvector} of $\lambda_c+1$ components (a column in Figure \ref{conf-fig}). Each of the first $\lambda_c$ components of the subvector associated to a period $k$ stores the number of big jobs, for the pair $(s_c,w_c)$, of processing time $(\oeps)^{\delta_c+r-1}$ on machines with speed $s_c$ for $r=1,2,\ldots,\lambda_c$ that are started in that period on a machine assigned configuration $c$, and we denote this value of the component by $B_c(k,r)$. The last component of the subvector associated to a period $k$ stores the floor of the ratio of the sum of the processing time of all the small jobs, for the pair $(s_c,w_c)$, that are started in that period on a machine assigned configuration $c$ and $\eps\pho w_c$, denoted by $S_c(k)$. That is,
\[
S_c(k) = \left\lfloor\frac{\sum ({p_j}/{s_c})}{\eps\pho w_c}\right\rfloor\ ,
\]
where the summation is over the set of the small jobs assigned to period $k$ and configuration $c$ on such machine.

\paragraph{Number of configurations.}
Our next goal is to upper bound the number of configurations with a common given pair $(s_c,w_c)$ by a constant depending on $\eps$ and to upper bound the number of configurations for all such pairs by a polynomial in the input encoding length.  We will prove the following upper bounds.

\begin{lemma}
For a fixed value for speed and load bound, the number of possible configurations is at most 
\[
	\left(\frac{1+\eps}{\eps\pho}+1\right)^{\tilde{\mu}(\lambda+1)}\ .
\]
In total there are polynomial number of configurations.
\end{lemma}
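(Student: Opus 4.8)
The plan is to bound separately the number of configurations that share a common pair $(s_c,w_c)$ and the number of admissible pairs $(s_c,w_c)$, and then to multiply the two estimates.

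\emph{Configurations over a fixed pair $(s_c,w_c)$.} Once $s_c$ and $w_c$ are fixed the first two coordinates of a configuration are determined, so a configuration is specified entirely by its $\tilde{\mu}$ period-subvectors, that is, by the nonnegative integers $B_c(k,r)$ over the $\tilde{\mu}$ periods $k$ and the indices $r=1,\dots,\lambda_c$, together with the integers $S_c(k)$ over those $\tilde{\mu}$ periods — a total of $\tilde{\mu}(\lambda_c+1)$ integer entries. The key observation is that, by construction of configurations and by our characterization of near-optimal solutions (Lemma~\ref{active_periods}), a machine carrying configuration $c$ has load at most $w_c$, so the total processing time placed on it is at most $w_c$, hence at most $(\oeps)w_c$. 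Every big job for the pair $(s_c,w_c)$ has processing time at least $\eps\pho w_c$, so the number of big jobs on such a machine — and a fortiori each single count $B_c(k,r)$ — is at most $(\oeps)w_c/(\eps\pho w_c)=\frac{1+\eps}{\eps\pho}$; likewise the total processing time of the small jobs assigned to any single period is at most $(\oeps)w_c$, whence $S_c(k)=\bigl\lfloor \sum(p_j/s_c)/(\eps\pho w_c)\bigr\rfloor\le\frac{1+\eps}{\eps\pho}$. Thus each of the $\tilde{\mu}(\lambda_c+1)$ entries takes at most $\lfloor\frac{1+\eps}{\eps\pho}\rfloor+1\le\frac{1+\eps}{\eps\pho}+1$ distinct values. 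Since $w_c$ is an integer power of $\oeps$ one checks that $\lambda_c\le\lambda$, so the number of configurations with this fixed pair is at most $\bigl(\frac{1+\eps}{\eps\pho}+1\bigr)^{\tilde{\mu}(\lambda_c+1)}\le\bigl(\frac{1+\eps}{\eps\pho}+1\bigr)^{\tilde{\mu}(\lambda+1)}$, which is a constant once $\eps$ is fixed because $\tilde{\mu}$, $\lambda$ and $\rho$ depend only on $\eps$.

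\emph{Number of admissible pairs.} The speed $s_c$ ranges over the set $S$ of distinct rounded machine speeds, and $|S|\le m\le n$. The load bound $w_c$ is an integer power of $\oeps$, and it suffices to let it range between the smallest processing time of any job on any machine (which is at least $p_{\min}$, since every rounded speed is at most $1$) and $(\oeps)\copt$ (in the near-optimal solutions of Lemma~\ref{active_periods} every load is at most the makespan, hence at most $(\oeps)\copt$). Since $\copt$ is at most the makespan of any feasible schedule, $\log_{\oeps}\bigl((\oeps)\copt/p_{\min}\bigr)$ is bounded by a polynomial in the input encoding length once $\eps$ is fixed, exactly as $\mu$ was bounded in Section~\ref{rounding}; hence the number of admissible values of $w_c$ is polynomially bounded. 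Multiplying $|S|$, this bound on the number of load bounds, and the per-pair constant from the previous paragraph yields a polynomial number of configurations in total.

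\emph{Where the work is.} The argument is essentially a counting argument; its only genuine content is the first step — observing that the load bound $w_c$ caps the total processing time a machine carrying $c$ can receive, which replaces any instance-dependent estimate of $B_c(k,r)$ and $S_c(k)$ by the uniform bound $\frac{1+\eps}{\eps\pho}$. The rest is bookkeeping: that a configuration has exactly $\tilde{\mu}$ period-subvectors, so the exponent is $\tilde{\mu}(\lambda+1)$ and no larger; that $\lambda_c\le\lambda$; and that the range of load bounds one must enumerate has polynomially bounded logarithm.
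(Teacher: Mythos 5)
Your proof is correct and follows essentially the same route as the paper: each of the $\tilde{\mu}(\lambda+1)$ entries $B_c(k,r)$ and $S_c(k)$ is bounded by $\frac{1+\eps}{\eps\pho}$ because the total processing time on a machine with load bound $w_c$ is at most $(\oeps)w_c$ while every big job (and every unit counted by $S_c(k)$) contributes at least $\eps\pho w_c$, giving the stated per-pair constant. The only (immaterial) difference is in counting the admissible pairs: the paper bounds the number of load values by $n(\log_{(\oeps)}n+2)$ via the maximum processing time of a job on the configuration, whereas you enumerate powers of $\oeps$ in $[p_{\min},(\oeps)\copt]$, which is polynomial in the encoding length; both yield the claimed polynomial total.
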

\begin{proof}
The number of big jobs that start in a time period is at most the number of big jobs that can be assigned to a machine, which is at most
$\frac{(\oeps)w_c}{\eps\pho w_c} = \frac{1+\eps}{\eps\pho}$.
Thus, the number of distinct values that $B_c(k,r)$ can take is at most 
$\frac{1+\eps}{\eps\pho} + 1$.
The number of distinct values that $S_c(k)$ can take is at most 
$\frac{(\oeps)w_c}{\eps\pho w_c} + 1 = \frac{1+\eps}{\eps\pho} + 1$.
For a fixed value for speed and load bound, the number of possible configurations is at most 
$\left(\frac{1+\eps}{\eps\pho}+1\right)^{\tilde{\mu}(\lambda+1)}$ as we claimed. 

The number of distinct values that the load of a configuration given a fixed speed of the configuration can take can be upper bounded by the following argument.  There are $n$ options for the maximum processing time of a job assigned to this configuration, and the load bound is between this processing time and $n$ times this processing times. The number of distinct values of speed of a configuration is at most the number of machines. Thus the total number of configurations is at most 
$ \left(n\left(\log_{(\oeps)}{n}+ 2\right)\right) \cdot m \cdot\left(\frac{1+\eps}{\eps\pho}+1\right)^{\tilde{\mu}(\lambda+1)}$,
which is upper bounded by a polynomial in the input encoding length when $\eps$ is fixed.
\qed \end{proof}

\paragraph{Generating a schedule of one machine from a configuration $c$.}  We would like to restrict ourselves to configurations that represent a non-preemptive schedule.  To do that, we define a partial schedule of one machine corresponding to a configuration $c$, and the possibility to create this partial schedule will mean that the configuration will be called {\em feasible}.
Based on the configuration $c$ with its speed $s_c$ and the load bound $w_c$, we create a schedule on $\tilde{\mu}$ periods intersecting $[\eps w_c, (\oeps)w_c]$ on that machine.
For every period $k$ in this interval, in an increasing order of $k$, perform the steps below.
\begin{itemize}
\item Initialize the starting time of the first job in period $k$ denoted as $Start_k$ to be 
$$\max\{u_k,\mbox{ the finishing time of the last processed job in period at most } k-1\} .$$ When $k$ is the first period in this interval, finishing time of last processed job in period $k-1$ is the starting time of period $k$.
	\item If $S_c(k)$ is  not zero, then assign a {\em virtual job} of size $S_c(k)\cdot \eps\pho w_c s_c$ to start at time $Start_k$ and increase the  value of $Start_k$ by the processing time of this virtual job. 
		\item The remaining jobs assigned to period $k$ are big jobs and these big jobs are scheduled sequentially without idle time in non-decreasing order of their size, the number of big jobs of each size are as described in $c$, and the first such job is scheduled to start at $Start_k$. 
\end{itemize}

A configuration is {\em feasible} if  for each period in the schedule generated, there is at most one job (that is not a virtual job) whose processing crosses into the successive period(s).
Let $C$ be the set of all feasible configurations and we compute $C$ in advance with the required time complexity of the form $f(\frac{1}{\eps})\cdot poly(n)$.  We can indeed compute $C$ in advance in this required time complexity as the algorithm for generating a schedule of one machine based on a configuration is a linear time algorithm and verifying feasibility is carried out during this generation process.

\paragraph{Notation for the MILP.}

A MILP is developed that uses the guessed information of $\opt$ to schedule the jobs to configurations for the rounded instance, $I'$. For each configuration $c \in C$, the MILP assigns the jobs to only a particular set of consecutive periods such that when that configuration is assigned to a machine, the machine will be active for only that particular set of consecutive periods. The set of periods is defined by the configuration. Let
\[
	\nu_c^l = \lfloor\log_{(\oeps)}\eps w_c\rfloor\ ,\ 
	\nu_c^u = \lceil\log_{(\oeps)}(\oeps) w_c\rceil\ .
\] 
Then the set of consecutive time periods for which a machine assigned configuration $c$ will be active (or more precisely it will not be active in other time periods) is denoted by $K_c = \{(\oeps)^k:k = \nu_c^l,\nu_c^l+1,\nu_c^l+2,\ldots,\nu_c^u\}\ ,\forall c\in C$. 
Furthermore, let
 $K=\{(\oeps)^i: i = \log_{\oeps} u_1, \log_{\oeps} u_1+1, \ldots, \log_{\oeps} \copt +1\},$ denotes the set of universal time periods for any configuration and recall that $|K| = \mu$ is bounded by a polynomial in the input encoding length. 

Since the classification of the jobs depend on both the speed of the configuration and the load of the configuration the distinct size of the big jobs change with respect to the configuration. Hence let $R_c\ ,\forall c\in C$ denote the set of indexes for integer power of $\oeps$ corresponding to sizes of big jobs for the pair $(s_c,w_c)$ of configuration $c$.  

For each speed $s$ and load bound $w$ that is an integer power of $\oeps$, create the sets $HJ(s,w)$, $BJ(s,w)$ and $SJ(s,w)$ for huge jobs, big jobs and small jobs for the pair $(s,w)$ respectively. Order the jobs in each list in non-decreasing order of the resource requirements. 

\paragraph{The decision variables of the MILP.}  We have two families of decision variables.

{\bf Configuration counters.} Denoted by $z_{c}$. The variable $z_c$ counts the number of machines assigned configuration $c\in C$.
Each configuration counter for a configuration of fast speed and large load is required to be integer and  configuration counters for remaining configurations are allowed to be fractional.
Thus the number of integer configuration counters is at most 
$\left(\log_{(\oeps)} \frac{1}{\kappa} + 2\right)^2\cdot\left(\frac{1+\eps}{\eps\pho}+1\right)^{\tilde{\mu}(\lambda+1)}$,
which is a constant when $\eps$ is fixed. 
The number of fractional configuration counters is polynomial in the input encoding length.

{\bf Assignment variables.} Denoted by $x_{jkc}$. In an integer solution, if $x_{jkc}$ is 1, job $j$ is assigned to configuration $c$ and to start in period $k$. All these variables are allowed to be fractional. 
The number of assignment variables is at most
$n\cdot \vert K \vert \cdot \vert C\vert $,
which is upper bounded by a polynomial in the input encoding length.

\paragraph{The MILP.} 
The objective function of the MILP is the minimization of
\begin{align*}
\psi\copt +(1-\psi)\sum_{c\in C}z_c\cdot w_c^\phi\ .
\end{align*} 

We next list the constraints of the MILP (together with their meaning).

 A job can be assigned to a configuration and period combination only as a non-huge job.
 	\begin{align}
 	\sum_{c\in C:j\not\in HJ(s_c,w_c)}\sum_{k\in K_c} x_{jkc} = 1\ ,\forall j\in J \ .\label{only_one_assignment} 
 	\end{align}
 	
  For each size, the number of big jobs of that size assigned to a period $k$ of a configuration $c$ is equal to the number of big jobs of that size in period $k$ of  configuration $c$ times the number of machines of this configuration.
 	\begin{align}
 	&\sum_{j:j\in BJ(s_c,w_c),p_{j}=(\oeps)^{\delta_c+r-1}} x_{jkc} = z_c\cdot B_c(k,r)\ ,\forall c\in C, k\in K_c, r\in R_c\ . \label{big_jobs}
 	\end{align}
 An upper bound on the sum of processing time of small jobs (for every configuration and every period), while allowing some additional space that is needed since $S_c(k)$ was obtained by rounding down.
 	\begin{align}
 	&\sum_{j\in SJ(s_c,w_c)} (x_{jkc}\cdot p_j) \leq z_c\cdot s_c\cdot (S_c(k)+1)\cdot\eps\pho w_c\ ,\forall c\in C,k\in K_c\ .  \label{small_jobs}
 	\end{align}
 The number of chosen configurations for a particular speed should be equal to the number of machines of that speed.
 	\begin{align}
 	\sum_{c\in C:s_c=s} z_c &= m(s) \ ,\forall s \in S\ . \label{configurations_machines}
 	\end{align} 
In each period, the resource requirement of a job is met before the job starts processing. Resource can be carried over to future periods, if available.
 	\begin{align}
 	\sum_{c\in C}\sum_{k':k'\leq k}\left(\sum_{j}(x_{jk'c}\cdot d_j)\right)\leq \sum_{k':k'\leq k}q_{k'}\ ,\forall k\in K\ . \label{resource_constraint}
 	\end{align}
Constraint that enforces the guessing of the makespan and the machine on which makespan is achieved, once again we slightly relax the condition and allow a machine of that speed with slightly smaller load bound:
 \begin{align}
 \sum_{c\in C:s_c=\sopt,(\oeps)w_c\geq\copt} z_c\geq 1\ .\label{guessing_constraint}
 \end{align}
The bounds for the assignment variable.
 	\begin{align}
 	0 \leq x_{jkc}\leq 1&\ ,\forall c\in C, k\in K, j\in J\label{0_x_1}\ .
 	\end{align}
For configurations with fast speed and large load, the configuration counter variable is an integer and all configuration counters are non-negative.
 \begin{align}
 	&z_c \geq 0\ ,\forall c\in C\ ,\text{ and}\label{nonnegative_z}\\
 	&z_c \in \mathbb{Z} \ ,\forall c\in C: s_c\geq \kappa, w_c\geq \kappa\copt\ .\label{integer_z}
 \end{align}

\section{Using the MILP\label{sec:last}}

The remaining parts of our scheme and its analysis are as follows.  We first show in Theorem \ref{theorem_1} that (for the correct guessed value) there is a solution to the MILP whose objective function value is at most $(\oeps)^{\phi}$ times the objective value of the near optimal solution satisfying Lemma \ref{active_periods}. That is, we prove that the MILP has a feasible solution of cost not significantly larger than the cost of the near optimal schedule of our problem.  Then, our scheme continues by finding an optimal solution for the MILP, and then transform it into a feasible schedule (a feasible solution to our scheduling problem) of cost not much larger than the cost of the given optimal solution for the MILP.  The use of the solution of the MILP for constructing our output is considered in subsection \ref{sec:thm2}.

\begin{theorem}\label{theorem_1}
The optimal objective function value of the MILP is at most $(\oeps)^{\phi}$ times the objective value of the near optimal solution satisfying Lemma \ref{active_periods} (as a solution to the scheduling problem). 
\end{theorem}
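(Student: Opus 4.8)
The plan is to show that, in the iteration that uses the correct guess $(\copt,\sopt)$, the MILP admits a \emph{feasible} solution whose objective value is at most $(\oeps)^{\phi}\cdot obj(\sigma)$, where $\sigma$ is the near optimal schedule produced by Lemma \ref{active_periods}; this bounds the optimum of the MILP. Recall that for the correct guess the makespan of $\sigma$ lies in $[\copt,(\oeps)\copt]$, is attained on a machine of speed $\sopt$, and every machine is active in at most $\tilde{\mu}$ consecutive periods. For each machine $i$, let $\Lambda_i$ be its load in $\sigma$ and let $w_i$ be $\Lambda_i$ rounded up to an integer power of $\oeps$ (so $\Lambda_i\le w_i<(\oeps)\Lambda_i$); no job on $i$ is huge for $(s_i,w_i)$, since a huge job alone would not fit in $[0,\Lambda_i]$. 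Keeping the partition of $i$'s jobs according to the period in which they start in $\sigma$ (and, if necessary, delaying the whole schedule of $i$ so that the first such period is at least $\lfloor\log_{\oeps}\eps w_i\rfloor$), rearrange the jobs of each period by placing all small jobs (for $(s_i,w_i)$) contiguously at the front and the big jobs after them in non-decreasing order of size, and remove idle time inside each period's block. This is exactly the one-machine schedule that the generation procedure of Section \ref{confs} produces from the configuration $c_i$ with speed $s_i$, load bound $w_i$ and counters $B_{c_i}(k,r)$, $S_{c_i}(k)$ read off from this rearranged schedule; we set $z_c$ to the number of machines $i$ with $c_i=c$ and $x_{jkc_i}=1$ exactly when $j$ is placed on $i$ and assigned to the period $k$ described above (all other assignment variables being $0$).

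Next I would check the MILP constraints for this candidate solution. Constraint \eqref{only_one_assignment} holds because each job is assigned once as a non-huge job of its machine and its period belongs to $K_{c_i}$ by Lemma \ref{active_periods} together with the period bookkeeping above; \eqref{big_jobs} holds by the definition of $B_{c}(k,r)$; \eqref{small_jobs} holds because $S_c(k)$ is obtained by rounding down, so the true total processing time of the small jobs assigned to period $k$ is strictly below $(S_c(k)+1)\eps\pho w_c$, which is precisely the slack allowed in \eqref{small_jobs}; \eqref{configurations_machines} holds since every machine receives one configuration of its own speed; \eqref{resource_constraint} holds because the jobs assigned to periods at most $k$ are precisely those that start before the end of period $k$ in $\sigma$, so their total resource requirement is at most $\sum_{k'\le k}q_{k'}$ by feasibility of $\sigma$; \eqref{guessing_constraint} holds because the machine attaining the makespan has speed $\sopt$ and load at least $\copt$, hence its load bound $w_c$ satisfies $(\oeps)w_c\ge\copt$; and \eqref{0_x_1}, \eqref{nonnegative_z}, \eqref{integer_z} hold because all $x$'s are $0/1$ and each $z_c$ is a nonnegative integer (a count of machines). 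For the objective, $\sum_{c\in C}z_c w_c^{\phi}=\sum_i w_i^{\phi}\le\sum_i\bigl((\oeps)\Lambda_i\bigr)^{\phi}=(\oeps)^{\phi}\Phi(\sigma)$ while $\copt\le\max_i\Lambda_i$, so the objective value of this solution is at most $\psi\max_i\Lambda_i+(1-\psi)(\oeps)^{\phi}\Phi(\sigma)\le(\oeps)^{\phi}obj(\sigma)$.

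The main obstacle is the first claim of the previous paragraph: that the configuration $c_i$ read off a machine is \emph{feasible}, i.e. that the period-by-period rearrangement really does yield a valid one-machine schedule confined to the $\tilde{\mu}$ periods of $K_{c_i}$ with at most one job per period whose processing spills into the successive period(s). The key structural input is that in $\sigma$ every job that starts in a period $k$ other than the last one started there finishes before the next replenishment date; hence after compaction all jobs of period $k$ except the largest occupy a length at most that of period $k$, so exactly one of them (the largest, a big job) can cross into a later period, and the accumulated spill across the $\tilde{\mu}$ active periods, together with $\Lambda_i\le w_i$, stays inside the window $[\eps w_i,(\oeps)^2 w_i]$ covered by $K_{c_i}$; this is also what pins down how far $i$ may be delayed in the bookkeeping step while keeping $w_i<(\oeps)\Lambda_i$. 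A secondary point requiring care is that the rearrangement never starts a job before its replenishment date — which is why the generation procedure initializes $Start_k$ to be at least $u_k$ — so that the argument for \eqref{resource_constraint}, and the resource feasibility of the schedule produced by the generation procedure, go through.
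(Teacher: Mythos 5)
Your proposal is correct and follows essentially the same route as the paper's proof: read off a configuration $(s_i,w_i,B,S)$ from each machine of the near-optimal schedule $\sigma$, argue via the generation procedure (sand for small jobs, big jobs in non-decreasing size order, at most one non-virtual job crossing a period boundary) that these configurations are feasible, set $z_c$ and $x_{jkc}$ as the corresponding counts and indicators, check constraints \eqref{only_one_assignment}--\eqref{integer_z}, and bound the objective using $w_i\le(\oeps)\Lambda_i$. Your handling of the objective (using $\copt\le\max_i\Lambda_i$ directly) and your explicit attention to the period bookkeeping are, if anything, slightly cleaner than the paper's, but the argument is the same.
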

\begin{proof}
Assume that there exists a near optimal schedule $\sigma$ to instance $I'$ that satisfies Lemma \ref{active_periods}.  
Create a configuration for each machine $i$ and with a slight abuse of notation we let $i$ denote also the configuration for this machine. We create this configuration as follows. For the first component of the configuration $i$ store the speed of the machine, $s_i$, for the second component store the load of the machine rounded up to an integer power of $(\oeps)$, $w_i$, and create $\tilde{\mu}$ components. For each period $k$ in the interval containing all active periods, identify the number of big jobs for each distinct size of big jobs for $(s_i, w_i)$ assigned to that machine in period $k$, and save that as $B_i(k,r)$ (in configuration $i$). For each period $k$ find 
\[
	S_i(k) = \left\lfloor\frac{\sfrac{\sum p_j}{s_i}}{\eps\pho w_i}\right\rfloor\ ,
\]
where the summation is over the small jobs for $(s_i,w_i)$ assigned to machine $i$ in period $k$. 

Thus,  this defines  a configuration for each machine $i$. Two configurations are identical if each component of the two configurations are equal. Let $C'$ be the set of distinct configurations. 
In order to check the feasibility of the configurations in $C'$, create a schedule from a configuration $c\in C'$. Assign the configuration to a machine $i$ of speed $s_c$. The configuration has $\tilde{\mu}$ active time periods and let the active periods start at time period $k_{\ell}$. First, we assign idle time before period $k_{\ell}$ to machine $i$. Then, for each time period $k\in [k_{\ell},k_{\ell}+\tilde{\mu}-1]$ perform the following in an increasing order of $k$. Assign {\em sand} representing small jobs assigned to the configuration $c$ to machine $i$. The total size of the sand in period $k$ is at most $S_c(k)\eps\pho \cdot w_c\cdot  s_c$. Then, assign the big jobs in a non-decreasing order of size, $(\oeps)^{\delta_c+r-1}$. Hence, in a non-decreasing order of $r$ we assign $B_c(k,r)$ big jobs of size $(\oeps)^{\delta_c+r-1}$ sequentially after the processing of the sand. The above assignment of jobs to periods is feasible due to the following reasons.  First, the schedule $\sigma$ is feasible; second, the size of the small jobs was rounded down while the creation of the schedule so the total size of small jobs cannot increase, with respect to the total size of small jobs assigned by $\sigma$; and last the jobs are assigned sequentially in non-decreasing order of their size.  Therefore, the only job that might cross over into the succeeding period is a job of the largest size and it must be a large job.  Thus, $C'\subseteq C$.
Define $z_c\in\mathbb{Z}$, for every configuration $c\in C$, by letting $z_c$ be the number of machines assigned configuration $c$ (this is zero if $c\in C\setminus C'$).
Define $x_{jkc}\in\mathbb{Z}$ for $j\in J, k \in K, c\in C$, where $K$ is the set of all possible time periods as follows. For each job $j$, $x_{jkc} = 1$ if $j$ is assigned to a machine with configuration $c$ in period $k$, else $x_{jkc}=0$.  Next, we establish that this is a feasible solution for the MILP.

Constraint \eqref{only_one_assignment} and \eqref{0_x_1} are satisfied because each job is assigned to only one machine and only once by definition of $x_{jkc}$. Constraint \eqref{big_jobs} is satisfied by the definition of $x_{jkc}$, $z_c$, and $B_c(k,r)$.  In the definition of $S_c(k)$ the summation is over the set of small jobs with respect to $(s_c,w_c)$ assigned to start in a period. Thus for each configuration $c$
$\frac{\sum\frac{p_j}{s_c}}{\eps\pho w_c}-1 \leq S_c(k)$ so
$\sum p_j \leq s_c(S_c(k)+1)\eps\pho w_c$, and taking into account identical configurations, we get $\sum p_j \leq z_c\cdot s_c\cdot (S_c(k)+1)\cdot\eps\pho w_c$,
where the summation in the last inequality is over small jobs for $(s_c,w_c)$. Thus, Constraint \eqref{small_jobs} is satisfied.  Constraint \eqref{configurations_machines} is satisfied by definition of $z_c$.  Constraint \eqref{resource_constraint} is satisfied because $\sigma$ is a feasible schedule. The configuration of the machine on which the makespan is achieved ensures that constraint \eqref{guessing_constraint} is satisfied since $\sigma$ satisfied Lemma \ref{active_periods}. Constraints \eqref{nonnegative_z} and \eqref{integer_z} are satisfied by definition of $z_c$ as all these variables are non-negative integers. 

Thus the configurations $C$ are all feasible configurations, and the solution we have defined $(\textbf{x},\textbf{z})$ is a feasible solution to the MILP. The objective value (of the MILP) for the constructed solution is at most
\begin{align*}
\psi (\oeps) \max_{c\in C'}w_c + (1-\psi)\sum_{c\in C}z_cw_c^\phi\ ,
\end{align*} 
since there is a configuration in $C'$ with load bound of at least $\frac{\copt}{\oeps}$.  
Note that for every machine $i$ whose configuration is $i$, we have that the load bound in the configuration $w_i$ is at most $\oeps$ times the load of $i$ in $\sigma$, so the claim regarding the cost of the solution to the MILP in terms of the cost of $\sigma$ holds.
\qed \end{proof}

\subsection{Transforming the MILP solution into the output \label{sec:thm2}}
The transformation of an optimal solution to the MILP to a near optimal solution for \prob\ that we describe in the rest of this section proves the following theorem.

\begin{theorem}\label{theorem_2}
If there exists a solution to the MILP for  $I$ for the given value of the guessed information, then there exists a feasible schedule to instance $I$ of cost at most $(1+6\eps)^{\phi}$ times the cost of the solution of the MILP. 
\end{theorem}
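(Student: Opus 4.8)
The plan is to start from an optimal vertex solution $(\mathbf{x},\mathbf{z})$ of the MILP and turn it into a genuine non-preemptive schedule for $I$ in three rounding stages, controlling the multiplicative cost blow-up so that the product of the blow-ups is at most $(1+6\eps)^\phi$. Throughout I keep the makespan part of the objective pinned at $\copt$ (it is a constant in the MILP objective), so the real work is to produce a feasible schedule whose makespan is at most $(1+6\eps)\copt$ and whose norm-cost is at most $(1+6\eps)^\phi\sum_{c\in C}z_c w_c^\phi$.

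\emph{Rounding the configuration counters.} By \eqref{integer_z} the counters of configurations with fast speed and large load are already integral; only counters of configurations of slow speed ($s_c<\kappa$) or small load ($w_c<\kappa\copt$) may be fractional. Within each speed class the number of distinct configurations is a constant and \eqref{configurations_machines} forces $\sum_{c:s_c=s}z_c=m(s)$, so I round these counters to integers keeping each class-sum equal to $m(s)$; this creates only a constant number of ``rounded-up'' machine slots per slow/small-load speed class. The jobs that were fractionally placed on configurations whose counter is rounded down are re-placed on the rounded-up copies (or on a controlled number of fresh machines). The point is that every machine touched here has either speed below $\kappa$ or load below $\kappa\copt$, so — for a suitably small choice of the so-far-unspecified $\kappa$ (and $\rho$) — its contribution to both the makespan and the norm-cost is a negligible fraction of $\opt$, which can be absorbed using lower bounds on $\opt$ (the makespan is at least $\copt$, and the total work gives a further bound). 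This is the step I expect to be the main obstacle: making the additive error from these slow/small-load machines, together with the re-placement of their jobs, provably at most an $O(\eps)$ fraction of the optimum is precisely what forces the definitions of $\kappa$ and $\rho$, and it needs a case analysis according to whether $\psi$ is close to $0$ or to $1$.

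\emph{Rounding the assignment variables.} With $\mathbf{z}$ now integral, the variables $\mathbf{x}$ lie in a transportation-type polytope cut out by \eqref{only_one_assignment}, \eqref{big_jobs}, \eqref{small_jobs}, \eqref{resource_constraint}, \eqref{0_x_1}, and I take a basic solution of this residual LP. For big jobs, \eqref{big_jobs} already makes the count of jobs of each size in each (configuration, period) slot an integer ($z_cB_c(k,r)$), so only few jobs are genuinely split across slots; these I re-round greedily, assigning within each size class the jobs of smallest resource requirement $d_j$ to the earliest periods, which can only decrease every resource prefix and hence preserves \eqref{resource_constraint} (here the ordering of $BJ(s,w)$ by $d_j$ is used). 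The small jobs I treat, as in the construction of Theorem~\ref{theorem_1}, as ``sand'' of total volume at most $z_cs_c(S_c(k)+1)\eps\pho w_c$ per (configuration, period), split it evenly among the $z_c$ machines of that configuration, and finally replace the sand on each machine by the actual small jobs greedily in non-decreasing $d_j$ order, overflowing by at most one small job — of processing time below $\eps\pho w_c$ — per period per machine. Since $\tilde\mu$ is a constant and each such slack is an $O(\eps)$ fraction of the per-machine load bound, the extra load on every machine is an $O(\eps)$ fraction of $w_c$.

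\emph{Realizing the schedule and assembling the bound.} Finally I run the one-machine schedule-generation procedure of Section~\ref{confs} on each machine with its (now integral) job assignment: feasibility of the configuration guarantees that in every period at most one non-virtual job crosses into later periods, so each machine is active in at most $\tilde\mu$ consecutive periods intersecting $[\eps w_c,(\oeps)w_c]$ and its load is at most $(\oeps)^2w_c$ times the accumulated $(1+O(\eps))$ slack; the period-granular inequality \eqref{resource_constraint}, combined with the fact that every job assigned to period $k$ starts no earlier than $u_k$, yields the continuous-time resource-feasibility condition required by $I$. Collecting the blow-ups from the three stages gives a makespan of at most $(1+6\eps)\copt$ and a norm-cost of at most $(1+6\eps)^\phi\sum_{c\in C}z_cw_c^\phi$, hence an objective within a factor $(1+6\eps)^\phi$ of the MILP optimum, as claimed.
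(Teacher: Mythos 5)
Your overall three-stage outline matches the paper's, but the step you yourself flag as ``the main obstacle'' --- disposing of the fractional configuration counters --- is exactly where the proof lives, and your proposed mechanism for it does not work as stated. Rounding the slow/small-load counters to integers \emph{while keeping each class-sum equal to $m(s)$} forces some counters down, and the jobs fractionally carried by a rounded-down configuration $c$ have nowhere safe to go: a ``rounded-up copy'' is a machine of a \emph{different} configuration $c'$ with a possibly much smaller load bound $w_{c'}$, so parking load $\approx w_c$ there destroys both the per-machine load guarantee and the accounting of the norm-cost against $\sum_c z_c w_c^\phi$; and there are no ``fresh machines'' to use, since the machine count is fixed by the instance. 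The paper's resolution is the opposite move: set $\hat z_c=\lceil z_c\rceil$ for \emph{every} configuration, accept that this creates up to one \emph{virtual} machine per configuration (so too many machines in total), and at the very end move all jobs on virtual machines onto the single designated real machine $\mathcal{i}$ of speed $\sopt$ whose load is already $\Theta(\copt)$. The whole point of the definition of $\kappa$ (and of requiring integrality only for fast-speed, large-load counters) is to make the total size of jobs on all virtual machines at most $2\eps\copt\sopt$, via geometric sums over load bounds and speeds; no case analysis on $\psi$ is needed or performed. Without this (or an equivalent) quantitative argument your first stage is an unproven assertion, not a reduction.

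A second, smaller gap is resource feasibility of the overflow jobs. You let each machine absorb one extra small job per period beyond its sand volume, but an extra job started in period $k$ consumes resource that constraint \eqref{resource_constraint} never budgeted for in that prefix of periods, so feasibility can fail. The paper avoids this by assigning in each period at most $\alpha_{r'k}=\left\lfloor\sum_{j}\sum_{c}x_{jkc}\right\rfloor$ jobs of each size (never more than the MILP's fractional count, and always the ones of smallest resource requirement), and deferring every leftover job to start on $\mathcal{i}$ no earlier than $\copt$, by which time all the resource needed by the instance has been supplied; a separate lemma bounds the total size of these leftovers by $\eps\copt$. You need both of these devices (or substitutes) before the $(1+6\eps)^{\phi}$ bookkeeping can be carried out.
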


Let the MILP solution be $\textsf{sol}$,  the values of the decision variables in this MILP solution be $(\textbf{x},\textbf{z})$, and the guessed information be $\copt$ and $\sopt$. $\textbf{x}\in\mathbb{R}$ but the $z$ value is integer for configurations with fast speed and large load. Let $\hat{z}_{c} = \left\lceil z_{c}\right\rceil\ ,\forall c\in C$. This converts all fractional configuration counters to integers while the configuration counters for fast speed and large loads remain unchanged and integer. Thus for every speed $s$ we add $\sum_{c\in C:s_c=s}\hat{z}_c - m(s)$
{\em virtual machines}. Virtual and real machines in this discussion do not refer to the machines in the instance $I$, and only act as terms to explain the last rounding phase of our scheme. Thus, now we have more machines than specified in the input instance. 

Next, we assign configurations to machines, i.e., every machine $i$ is now associated with a configuration $c$ whose first component is the speed of $i$ such that the number of machines that are assigned a configuration $c$ is equal $\hat{z}_c$. We allocate the configurations to real and virtual machines in a way that for every configuration $c$, at least $\lfloor z_c \rfloor$ machines that were assigned configuration $c$ are real machines (and at most  one machine assigned configuration $c$ is a virtual machine).  This last requirement is possible as the $z$ values satisfy constraint \eqref{configurations_machines}. For further discussions let $C$ be the set of configurations $c$ with $\hat{z}_c>0$. 

Fix a real machine of speed $\sopt$ as the machine which attains the makespan by assigning the configuration of largest load of speed $\sopt$ to this machine. From constraint \eqref{guessing_constraint}, the load bound of this machine is at least $\copt/(\oeps)$. Let that machine be denoted as $\mathcal{i}$.

We first show how to allocate the jobs to the machines (virtual or real) such that the load of every such machine is at most $1+O(\eps)$ times its load bound and so that the resource required by the jobs is supplied on time.  Later on we will analyze the impact of not using virtual machines.

\paragraph{Allocating most jobs and leaving only some jobs that are small for at least one machine.}
Recall that  $K$ denotes the set of universal time periods for any configuration.
Let $R'$ denote the set of integer powers of $(\oeps)$ corresponding to distinct sizes of all jobs in the (rounded) instance $I$. Let $\alpha_{r'k}$ be the number of jobs of size $(\oeps)^{r'}, r'\in R'$, assigned to start in period $k$ based on the MILP solution, that is
\begin{align*}
\alpha_{r'k} = \left\lfloor\sum_{j\in J:p_j=(\oeps)^{r'}}\  \sum_{c\in C}x_{jkc}\right\rfloor\ .
\end{align*}
Thus according to the MILP solution there is enough resource to start processing $\alpha_{r'k}$ jobs of size $(\oeps)^{r'}$ in period $k$ and furthermore there is enough available time to process so many jobs of each size on the machines of $I$. Intuitively, since $\alpha_{r'k}$ is rounded down, for every size for which there is a machine with speed and load bound for which the size is a size of a small job, there is at most one job of this (small) size left unassigned (with respect to the MILP solution) in each such period.

For every machine and every active period, in the first rounding step, we increase the total size available for small jobs in period $k$ by $2\eps\pho s_cw_c$, where $c$ is the configuration assigned to the machine. This is done to ensure that the process we present will be feasible. Now the total size available for small jobs on a machine in period $k$ is $(S_c(k)+2)\eps\pho s_cw_c$, where $k$ is one of the active periods of that machine.   We will allow adding the additional small jobs at the end of the schedule of this machine. 

Next in the second rounding step that we describe below, we assign almost all jobs to the real or virtual machines where the load of every machine with configuration $c$ will be approximately $w_c$.  More precisely, for each period $k\in K$, in an increasing order of $k$, perform the following.

Create a list of jobs $J_k$ for which there exists $c\in C$ such that $x_{jkc}>0$ and $j$ was not scheduled to start in an earlier period. Sort $J_k$ in non-decreasing order of the resource requirement.
The assignment of big jobs is detailed next.
Go over each machine one by one and perform the following operations. Identify the set of sizes of jobs that are big for the machine assigned configuration $c$ and let $R_c$ denote the set of indexes for integer powers of $\oeps$ corresponding to sizes of big jobs for the pair $(s_c,w_c)$ for configuration $c$ where $c$ is the configuration of the current machine. For each $r\in R_c$, in an increasing order of $r$, perform the following operations. Identify the jobs of size $(\oeps)^{\delta_c+r-1}$ in $J_k$, in the order the jobs appear in $J_k$. Assign the jobs one by one in order starting with the first job as follows. A big job of size $(\oeps)^{\delta_c+r-1}$ can be assigned to the current machine, assigned configuration $c$, if after assigning the job to this machine and period the following two conditions hold.  First, the total number of jobs of size $(\oeps)^{\delta_c+r-1}$ assigned to all the machines in period $k$ is at most $\alpha_{(\delta_c+r-1)k}$ and second the total number of big jobs of size $(\oeps)^{\delta_c+r-1}$ assigned to the current machine in period $k$ is at most $B_c(k,r)$. If a job is assigned to a machine, remove it from $J_k$.

Once the above procedure is completed for the current value of $k$ we move on to schedule jobs as small jobs (in this period) as follows.  
Go over the machines in non-decreasing order of the product of speed and load bound (of its assigned configuration) and perform the following operations for assigning jobs to the current machine. Identify the set of sizes of jobs that are small for the machine assigned configuration $c$ and let $R'_c$ denote the set of indexes for integer powers of $\oeps$ corresponding to sizes of small jobs for the pair $(s_c,w_c)$ for configuration $c$. For each $r\in R'_c$, in an increasing order of $r$, perform the following operations. Identify the jobs of size $(\oeps)^{\delta_c+r-1}$ in $J_k$, in the order the jobs appear in $J_k$. Assign the jobs one by one in order starting with the first job as follows. Assign the job to the machine if the total size of small jobs on that machine in period $k$, after assignment of the job, is at most $(S_c(k)+2)\eps\pho s_cw_c$, and the total number of jobs of size $(\oeps)^{\delta_c+r-1}$ assigned in period $k$ on all machines, after assignment of the job, is at most $\alpha_{(\delta_c+r-1)k}$.  Once we have considered all machines (for this period $k$), we increase $k$ by one and go back to creating the lists $J_k$.

After completing the assignment of jobs using the above two rules for all $k$, the second rounding step is completed.  
The remaining unassigned jobs are small jobs that were left unassigned due to the rounding down of $\alpha_{r'k}$.  In the last rounding step all non-assigned jobs are assigned to the end of the schedule on machine $\mathcal{i}$ and if necessary we add idle time before processing these jobs to ensure that they do not start before time $\copt$. This assignment of jobs in the third rounding step to the end of the schedule on $\mathcal{i}$ is feasible because there exists enough resource to assign all these jobs by the time machine $\mathcal{i}$ finishes processing its scheduled jobs using the fact that $\copt$  is not smaller than the last time in which a resource is supplied (considering only the resource needed by the set of jobs in the instance). 

In the last rounding step of our scheme all jobs that were assigned to virtual machines in the second rounding step are moved to be processed on machine $\mathcal{i}$ in an arbitrary order without idle time after the last job assigned to that machine (in the second or third step) is completed (once again starting not earlier than $\copt$).  Similarly to the argument regarding the feasibility of the assignment of jobs in the third rounding step, the resulting schedule is feasible if we are able to show that the assignment of jobs in the second rounding step is feasible with respect to the resource constraint.  This completes the description of the procedure of transforming the MILP solution into a feasible schedule for \prob.  We next turn our attention to the analysis of this transformation procedure.

\paragraph{Proving that the partial assignment of jobs to virtual and real machines is feasible.}
The feasibility of the partial schedule at the end of the second rounding step with respect to the resource constraint follows from the feasibility of the MILP solution and noting that in every prefix of periods (i.e., all periods $k'$ with $k'\leq k$ for a fixed value of $k$) and every size of jobs $(\oeps)^r$ (for a fixed $r\in R$) we have the following. The partial schedule schedules no more than $\sum_{k':k'\leq k} \alpha_{rk'}$ jobs of size $(\oeps)^r$ to start in this prefix of periods and these jobs have the least resource requirement (among the jobs of that size). However, by the definition of the $\alpha$ values, the MILP solution schedules at least so many jobs of this size to start during this prefix, and the Constraint \eqref{resource_constraint} guarantees that the total consumption of resource in this partial schedule in this prefix of period does not exceed the total amount of resource that is supplied in this prefix of periods.

Next, we would like to argue that at the end of the second rounding step for every period $k$ we have the following.  For every size of jobs $x$ for which there is no machine active in this period where $x$ is a size of a small job for that machine, the number of jobs of this size assigned to start in this period is exactly as in the MILP solution, and for every other size we are left with at most one unassigned job.  Thus we will prove the following lemma.

We will let $\rho=10$ and require the property that $\kappa \leq\eps^{2\rho+3}$ so if a job cannot be assigned as a small job to machine $\mathcal{i}$ then it must be assigned as a large job to a fast machine with large load and all these jobs are assigned (integrally) by the MILP solution.

\begin{lemma}
For every period $k$, let $S(k)$ be the set of sizes of jobs satisfying that for every $x\in S(k)$, there is at least one configuration $c$ where a job of size $x$ is not huge for $(s_c,w_c)$ and $x$ is small for machine $\mathcal{i}$.  At the end of the second rounding step we have that the set of unassigned jobs have at most one job of each size for every period where this size belongs to $S(k)$ (and no such job of this size if the size does not belong to $S(k)$).
\end{lemma}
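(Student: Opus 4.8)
The plan is to process the periods in the increasing order used by the second rounding step and, for each job size $x=(\oeps)^{r'}$, to keep track of how many jobs of that size have been started so far. Write $\beta_{r'k}=\sum_{j:p_j=x}\sum_{c\in C}x_{jkc}$ for the total (possibly fractional) amount of size-$x$ jobs the MILP starts in period $k$, so that $\alpha_{r'k}=\lfloor\beta_{r'k}\rfloor$ and, by Constraint~\eqref{only_one_assignment}, $\sum_k\beta_{r'k}$ equals the number of jobs of size $x$. The greedy rules never start more than $\alpha_{r'k}$ jobs of size $x$ in period $k$, and a size-$x$ job that is in $J_k$ but not started in period $k$ reappears in $J_{k'}$ for every later period $k'$ with $x_{jk'c}>0$ for some $c$; hence an unassigned job of size $x$ is precisely one that is rejected in every period for which the MILP gives it positive weight. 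It therefore suffices to establish the per-period statement ``when period $k$ is processed the greedy either starts its full quota $\alpha_{r'k}$ of size-$x$ jobs, or else it saturates every machine for size $x$, in which case it has already started at least $\alpha_{r'k}-1$ of them in period $k$ and $x\in S(k)$'' and then to sum the resulting shortfalls (each at most one, and only for sizes with $x\in S(k)$) over the periods.

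For the per-period statement, assume the greedy starts fewer than $\alpha_{r'k}$ jobs of size $x$ in period $k$; then by the time it has finished processing size $x$ every machine is saturated for $x$. Every machine with a configuration $c$ for which $x$ is big then carries exactly $B_c(k,r)$ jobs of size $x$, where $(\oeps)^{\delta_c+r-1}=x$, so by $\hat z_c\ge z_c$ and Constraint~\eqref{big_jobs} these configurations collectively carry at least $\sum_{c:\,x\text{ big for }c}z_cB_c(k,r)=\sum_{c:\,x\text{ big for }c}\sum_{j:p_j=x}x_{jkc}$ size-$x$ jobs, which is the whole ``big'' part of $\beta_{r'k}$. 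Every machine with a configuration $c$ for which $x$ is small has its period-$k$ small-job mass above $(S_c(k)+2)\eps\pho s_cw_c-x>(S_c(k)+1)\eps\pho s_cw_c$, since $x<\eps\pho s_cw_c$; because the small phase scans sizes in increasing order this mass comes only from small jobs of sizes at most $x$, and comparing it with Constraint~\eqref{small_jobs} scaled by $\hat z_c\ge z_c$ shows that the small-for-$x$ configurations together carry at least the small-job mass the MILP put on them in period $k$, hence — after discounting the single size-$x$ job that just failed to fit — essentially the corresponding number of size-$x$ jobs. Adding the two parts and using $\alpha_{r'k}=\lfloor\beta_{r'k}\rfloor$ yields that the greedy has started at least $\alpha_{r'k}-1$ jobs of size $x$ in period $k$.

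To see that saturation forces $x\in S(k)$, suppose instead that $x$ is not small for machine $\mathcal{i}$. By the property secured by the choices $\rho=10$ and $\kappa\le\eps^{2\rho+3}$ stated just before the lemma, the MILP then starts every job of size $x$ only on fast machines with large load and as a big job there; for such a configuration $c$ the counter $z_c$ is an integer by Constraint~\eqref{integer_z} and $B_c(k,r)\in\mathbb{Z}$, so the amount $z_cB_c(k,r)=\sum_{j:p_j=x}x_{jkc}$ the MILP starts of size $x$ on $c$ in period $k$ is an integer; summing over these configurations, $\beta_{r'k}$ is an integer and $\alpha_{r'k}=\beta_{r'k}$. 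During the big phase the greedy can then give each of the $\hat z_c=z_c$ machines of configuration $c$ its full quota $B_c(k,r)$ of size-$x$ jobs, and these sum to $\alpha_{r'k}$, matching both the global cap and — by the per-period statement applied to the earlier periods — the number of size-$x$ jobs then present; so all $\alpha_{r'k}$ of them get started in period $k$ with no shortfall. Hence no unassigned job of size $x$ arises when $x\notin S(k)$, and the lemma follows.

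The main obstacle is the small-job accounting in the second step: converting the mass inequality ``the greedy's total small-job mass on the small-for-$x$ configurations exceeds the MILP's'' into the clean conclusion that at most one job of size $x$ is short in period $k$, while simultaneously keeping the three roundings under control — the floor defining $\alpha_{r'k}$, the rounding $z_c\mapsto\hat z_c$, and the floor defining $S_c(k)$ — and while using that the small phase processes sizes in increasing order, so that on a saturated machine the smaller small sizes may have been drained from $J_k$ first. This is also precisely where the extra unit of small-job budget ($S_c(k)+2$ in the greedy against $S_c(k)+1$ in Constraint~\eqref{small_jobs}) is spent, and where one must check that the per-configuration losses do not accumulate.
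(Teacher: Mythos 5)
There is a genuine gap, and it sits exactly at the point you yourself flag as ``the main obstacle'': the conversion of the volume inequality for small jobs into a count of size-$x$ jobs. Your argument establishes (correctly) that if fewer than $\alpha_{r'k}$ jobs of size $x$ are started, then every machine for which $x$ is small carries small-job mass above $(S_c(k)+1)\eps\pho s_cw_c$, and that summing these bounds dominates the MILP's small-job volume from Constraint~\eqref{small_jobs}. But the step ``hence \dots essentially the corresponding number of size-$x$ jobs'' is an assertion of the conclusion, not a proof: the saturating mass on the $x$-small machines may consist largely of jobs of sizes smaller than $x$ that the MILP placed elsewhere (on machines with smaller $s_cw_c$, for which those sizes are small but $x$ is big), so a per-machine or per-size comparison of greedy mass against MILP mass does not bound the deficit in size-$x$ jobs by one. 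Controlling this displacement is the whole content of the lemma. The paper's proof handles it with a device absent from your proposal: order the machines by $s_cw_c$, take the \emph{maximal saturated suffix} $\mathcal{M}$ (the first machine $i'$ such that every machine from $i'$ onward has small-job load at least $(S_c(k)+1)\eps\pho s_cw_c$), and restrict attention to the set $\mathcal{R}$ of sizes that are small \emph{only} for machines in $\mathcal{M}$. Because the machine just before $\mathcal{M}$ is unsaturated, all jobs of sizes small for it have been placed (or capped by $\alpha$), so any genuinely stuck size lies in $\mathcal{R}$; and jobs of sizes in $\mathcal{R}$ can be packed by the MILP as small jobs only into configurations of machines in $\mathcal{M}$, whose total MILP volume $\sum_c z_c(S_c(k)+1)\eps\pho s_cw_c$ is already exceeded by what the greedy packed into $\mathcal{M}$ --- giving the contradiction $\mathcal{M}=\emptyset$. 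Without the suffix/$\mathcal{R}$ argument, your ``per-configuration losses do not accumulate'' check cannot be carried out, and indeed your own accounting loses one job \emph{per saturated machine} rather than one per size per period.

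A secondary problem: even granting your per-period claim, it only yields ``at least $\alpha_{r'k}-1$ jobs of size $x$ started in period $k$,'' whereas the lemma (and the paper's proof, which shows the greedy starts \emph{exactly} $\alpha_{r'k}$ such jobs) needs the stronger statement; with your version the residual per size and period is $\beta_{r'k}-\alpha_{r'k}+1<2$, i.e.\ up to two unassigned jobs per size per period rather than one, which is not what the lemma asserts and would degrade the subsequent bound on the load added to machine $\mathcal{i}$. Your treatment of the case $x\notin S(k)$ via integrality of $z_c$ and $B_c(k,r)$ for fast machines with large loads is correct and matches the paper.
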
 
\begin{proof}
Fix a size of jobs $(\oeps)^r$ (for $r\in R$),  we want to show that the number of unassigned jobs of this size is one or zero.
 Consider a configuration $c\in C$ and a period $k \in K$. Notice that $B_c(k,r)\hat{z}_c$ is integer and it upper bounds the number of jobs of this size assigned to a machine of this configuration and period as large jobs in the MILP solution. We argue that the second rounding step assigns exactly $\alpha_{kr}$ jobs of this size to start in period $k$ and this number is exactly $\sum_{j\in J:p_j=(\oeps)^{r}}\  \sum_{c\in C}x_{jkc}$ if $(\oeps)^r \notin S(k)$.  First, consider the second part, namely if $r$ is large enough so that there is no machine in which a job of size $(\oeps)^r$ is a non-huge job, then such jobs cannot be assigned to start at this period in the MILP solution.  
  It suffices to show that we indeed assign $\alpha_{kr}$ jobs of this size for every period $k$ and we would like to prove that for every $r\in R$.
 
Assume that this is not the case.  Since the assignment of jobs have over-used the positions as large jobs, the assignment of jobs as small jobs does not guarantee the feasibility of this assignment for some of the sizes. 
Notice that the machines are ordered in the non-decreasing order of the product of speed and load bound for the assignment of the small jobs. Thus when a job is small for a machine then the job will be small for all succeeding machines in the ordering. 

Identify the first machine $i'$, in the ordering, such that each machine in the suffix of machines starting from $i'$ and up to the end of the (ordered list of) machines, satisfies that the total size of small jobs is at least $(S_c(k)+1)\eps\pho s_cw_c$ and is about to exceed $(S_c(k)+2)\eps\pho s_cw_c$, for $c$ that is the configuration of the corresponding machine. Let $\mathcal{M}$ be the machine set of this suffix of machines. If the claim does not holds, then $\mathcal{M}\neq \emptyset$.  Denote by $\mathcal{R}$ the sizes of jobs that are small only for machines in $\mathcal{M}$. For the last machine in $M\backslash\mathcal{M}$ (if it exists), the total size of small jobs assigned to the machine, by the algorithm, is at most $(S_c(k)+1)\eps\pho s_cw_c$. This means that the total size of unassigned jobs that were small for this machine was less that $(S_c(k)+1)\eps\pho s_cw_c$. Let the last small job assigned to such a machine be $j'$. Then the last job that was small for this machine was $j'$ and all jobs of size greater than the size of $j'$ are small only for  the machines succeeding this machine in the ordering. Moreover, $j'$ was the last job of size not larger than $p_{j'}$ (along $J_k$) or we have already scheduled the maximum number of jobs of these sizes to this period, else the algorithm would assign more jobs of these sizes to this machine by the choice of the suffix of machines. 

The MILP solution was able to assign $x_{jkc}$ fraction of all jobs of sizes in $\mathcal{R}$ from $J_k$ to all configurations assigned to machines in $\mathcal{M}$ such that for each configuration the total size of small jobs on each configuration is at most $(S_c(k)+1)\eps\pho s_cw_c$ times $z_c$, from constraint \eqref{small_jobs}. Furthermore, the jobs with sizes in $\mathcal{R}$ could not be assigned to any machine in $M
\backslash\mathcal{M}$ as small jobs, since these jobs were big jobs for all those machines. The total (fractions) of configuration counters selected by the MILP solution for this assignment is no more than the number of machines in $\mathcal{M}$, and we get a contradiction to our assumption that $\mathcal{M}\neq \emptyset$ and the claim follows. 
\qed \end{proof}

Observe that at the end of the second rounding step a machine with configuration $c$ has load of at most $(1+2\eps)w_c$.  To see this fact observe that the first rounding step increases the load of a machine with configuration $c$ by no more than $\tilde{\mu} \cdot 2\eps\pho w_c\leq \eps w_c$ by our choice of $\rho=10$, using the definition of $\tilde{\mu} \leq \frac{2}{\eps^2}$.  The second rounding step does not increase the load of machines so this upper bounds on the load holds at the end of the second rounding step.  Since the third rounding step and the final rounding step only move jobs to machine $\mathcal{i}$ the last bound on the load of other machines continue to hold, and we consider the impact of these rounding steps on the load of $\mathcal{i}$.

\begin{lemma}
The total size of jobs scheduled to $\mathcal{i}$ in the third rounding step is at most $\eps \copt$.
\end{lemma}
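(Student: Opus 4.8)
My plan is to use the preceding lemma to control exactly which jobs survive to the third rounding step, and then to bound their total size by a size‑weighted count of periods. By that lemma, every job scheduled to $\mathcal{i}$ in the third rounding step has a size lying in some set $S(k)$, and by definition every size in $S(k)$ is small for machine $\mathcal{i}$. Since $\sopt\le 1$ and the configuration assigned to $\mathcal{i}$ has a large load — its load bound is at least $\copt/(\oeps)\ge\kappa\copt$ by Constraint \eqref{guessing_constraint}, and large loads lie in $[\kappa\copt,(\oeps)\copt]$, so $w_{\mathcal{i}}\le(\oeps)\copt$ — every such job $j$ satisfies $p_j<\eps\pho\sopt w_{\mathcal{i}}\le\eps\pho(\oeps)\copt$. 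Moreover the lemma leaves at most one surviving job of each size per period. Hence the quantity to bound is at most $\sum_r n_r(\oeps)^r$, where $r$ ranges over exponents with $(\oeps)^r<\eps\pho(\oeps)\copt$ and $n_r$ is the number of periods that can leave a job of size $(\oeps)^r$ unassigned.

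The key step is to bound $n_r$. A period $k$ can leave a job of size $(\oeps)^r$ unassigned only if the MILP solution has $x_{jkc}>0$ for some job $j$ of this size and some configuration $c$ for which this size is non‑huge; by Constraint \eqref{only_one_assignment} this forces $w_c\ge(\oeps)^{r}/((\oeps)s_c)\ge(\oeps)^{r-1}$ and $k\in K_c$, hence $k\ge\nu_c^l\ge\log_{(\oeps)}(\eps w_c)-1\ge r-2-\log_{(\oeps)}(1/\eps)$, while $k$ is processed by the second rounding step only for $k\in K$, so $k\le\log_{(\oeps)}\copt+1$. Therefore $n_r\le\log_{(\oeps)}\copt-r+4+\log_{(\oeps)}(1/\eps)$.

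Finally I would evaluate $\sum_r n_r(\oeps)^r$. Let $r^*$ be the largest exponent with $(\oeps)^{r^*}<\eps\pho(\oeps)\copt$, so that $(\oeps)^{r^*}<\eps\pho(\oeps)\copt$ and $\log_{(\oeps)}\copt-r^*\le\rho\log_{(\oeps)}(1/\eps)$. Substituting $t=r^*-r\ge 0$, the sum equals $(\oeps)^{r^*}\sum_{t\ge 0}(B_0+t)(\oeps)^{-t}$ with $B_0=\log_{(\oeps)}\copt-r^*+4+\log_{(\oeps)}(1/\eps)=O(\rho\log_{(\oeps)}(1/\eps))$; this geometric‑type series sums to $O(\rho\log_{(\oeps)}(1/\eps)+1/\eps)\cdot(\oeps)/\eps$, and since $\log_{(\oeps)}(1/\eps)=O(1/\eps^2)$ the total is at most $O(\rho)\cdot\eps^{\rho-3}\copt$. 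With $\rho=10$ this is $O(\eps^{7})\copt$, which is at most $\eps\copt$ once $\eps$ is below an absolute constant — something we may assume without loss of generality, since an approximation scheme for all sufficiently small $\eps$ yields one for all $\eps$.

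The step I expect to be the main obstacle is the accounting in the last two paragraphs: one must check that $\log_{(\oeps)}\copt$ enters $n_r$ only through the gap $\log_{(\oeps)}\copt-r$, which on the relevant range of exponents is $O(\rho\log_{(\oeps)}(1/\eps))$, so that after weighting by $(\oeps)^r$ and summing the resulting $(B_0+t)(\oeps)^{-t}$ series the input‑dependent term cancels and only a $\mathrm{poly}(1/\eps)$ factor remains, to be beaten by the $\eps\pho$ saved from \emph{small for $\mathcal{i}$}. A secondary subtlety is the bound $w_{\mathcal{i}}\le(\oeps)\copt$ (equivalently, that the transformed MILP solution uses only configurations of load bound at most $(\oeps)\copt$), which is exactly where the large/small‑load dichotomy and Constraint \eqref{guessing_constraint} are needed.
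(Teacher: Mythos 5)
Your proof is correct and reaches the stated bound, but it organizes the accounting genuinely differently from the paper. The paper works period by period: for each period it sums the geometric series of small-job sizes below the threshold $\eps\pho ws$ of the largest active pair $(w,s)$, getting $O(\eps^{\rho-1}ws)$ per period, then multiplies by the $\tilde{\mu}$ periods per load bound and sums geometrically over load bounds up to $O(\copt)$; it then needs a second, separate case for leftover sizes that are big for every machine active in their period. You instead work size by size: every leftover job is small for $\mathcal{i}$ by the preceding lemma, hence of size less than $\eps\pho\sopt w_{\mathcal{i}}\leq\eps\pho(\oeps)\copt$, and for each such size you bound the number of periods that can contribute one leftover job by the width $O(\rho\log_{\oeps}(1/\eps))$ of the window between $\nu_c^l$ and $\log_{\oeps}\copt+1$ forced by the non-huge condition together with $k\in K_c$. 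This unifies the paper's two cases into one size-weighted sum and yields $O(\rho)\,\eps^{\rho-3}\copt$, comparable to (in fact slightly sharper than) the paper's $2\eps^{\rho-7}\copt$; both comfortably beat $\eps\copt$ for $\rho=10$. The two assumptions you flag --- that the configuration of $\mathcal{i}$ has load bound at most $(\oeps)\copt$, and that only periods of $K$ and configurations with $K_c\subseteq K$ are relevant --- are exactly the assumptions the paper's own proof makes implicitly (its geometric sum over load bounds starts at $(\oeps)^2\copt$), so you are on equal footing there. The only cosmetic caveat is that your final constant in $O(\rho)\eps^{7}\copt\leq\eps\copt$ is left implicit and requires $\eps$ below a universal constant, whereas the paper tracks its constants explicitly; assuming $\eps$ sufficiently small is standard and harmless for an approximation scheme.
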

\begin{proof}
The total size of small jobs left unassigned in a given period is at most
$\sum_{\ell=0}^{\infty} (\oeps)^{-\ell}\eps\pho w s =(\oeps)\eps^{\rho-1}ws$,
where $w$ is the largest load bound among all the configurations where this particular period is active  and $s$ is the largest speed among all configurations with load bound $w$. 
The total size of small jobs left unassigned  for a given load bound $w$ is at most 
$(\oeps)\eps^{\rho-1}ws\cdot \tilde{\mu} \leq (\oeps)\eps^{\rho-1}w\cdot \tilde{\mu}$.
Then, the total size of small jobs left unassigned and thus moved to be processed on machine $\mathcal{i}$ is at most
$\tilde{\mu}\sum_{\ell=0}^{\infty}(\oeps)^{(2-\ell)}\eps^{\rho-1}\copt = (1+\eps)^3\eps^{\rho-2}\tilde{\mu}\copt \leq \eps^{\rho-4}\tilde{\mu}\copt$.
Notice that
$\log_{(\oeps)}\frac{(\oeps)}{\eps} \leq \frac{2}{\eps^2}$.
Since $\rho = 10$, we get that the total size of small jobs left unassigned in all the possible periods is at most
$\eps^{\rho-4}\frac{2}{\eps^2}\copt \leq \eps^{\rho-7}\copt = \eps^3\copt $.

In addition to that some sizes are large (for at least one machine active at the given period) and are perhaps unassigned at the end of the given phase.  For a given pair of $(w,s)$ where $w$ is the largest load bound among all configurations active at a given period, we can assume that $w\leq \eps^{\rho} \copt$.  Using the same sequence of inequalities as the above case we have the following. The total size of jobs left unassigned in a given period is at most
$\sum_{\ell=0}^{\infty} (\oeps)^{-\ell} (1+\eps) w s =\frac{(\oeps)^2}{\eps}ws$,
where $w$ is the largest load bound among all the configurations where this particular period is active  and $s$ is the largest speed among all configurations with load bound $w$. 
The total size of jobs left unassigned  for a given load bound $w$ is at most 
$\frac{(\oeps)^2}{\eps}ws\cdot \tilde{\mu} \leq \frac{(\oeps)^2}{\eps}w\cdot \tilde{\mu}$.
Then, the total size of jobs left unassigned that are large for at least one machine in their period and thus moved to be processed on machine $\mathcal{i}$ is once again at most
$\eps^{\rho} \copt \cdot \tilde{\mu}\cdot \frac{1}{\eps} \cdot \sum_{\ell=0}^{\infty}(\oeps)^{(2-\ell)} = (1+\eps)^3\eps^{\rho-2}\tilde{\mu}\copt \leq \eps^{\rho-4}\tilde{\mu}\copt$.
By $2\eps^3\leq \eps$ the claim follows.
\qed \end{proof}

\paragraph{Bounding the total size of jobs assigned to virtual machines.}
It remains to upper bound the increase of the load of machine $\mathcal{i}$ in the final rounding step.
Let 
\begin{align*}
\kappa = \frac{\eps\sopt}{\left(1 + 2\eps\right)\cdot \left(\frac{1+\eps}{\eps\pho}+1\right)^{\tilde{\mu}(\lambda+1)}\cdot \left(\frac{(\oeps)^3}{\eps^2}\right)}\ ,
\end{align*}
and note that indeed the required property of $\kappa \leq\eps^{2\rho+3}=\eps^{23}$ indeed holds.

\begin{lemma}
The total size of jobs on all virtual machines is at most $2\eps\copt\sopt$.
\end{lemma}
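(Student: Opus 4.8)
The plan is to bound the number of virtual machines first, then bound the size bound of the configuration assigned to each virtual machine, and multiply the two. Recall that at most one virtual machine is created per configuration $c\in C$ (since we require at least $\lfloor z_c\rfloor$ real machines per configuration), and that configuration counters $z_c$ for configurations of fast speed and large load are already integral in $\textsf{sol}$, so those configurations contribute no virtual machines. Hence every virtual machine carries a configuration $c$ with $s_c<\kappa$ or $w_c<\kappa\copt$. First I would note that each such virtual machine has load at most $(1+2\eps)w_c$ by the load analysis already established (the first rounding step adds at most $\eps w_c$, the later steps add nothing to machines other than $\mathcal{i}$), so it suffices to bound $\sum_{\text{virtual }c}(1+2\eps)w_c$.

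Next I would split the virtual machines into those whose configuration has \emph{small load} ($w_c<\kappa\copt$) and those whose configuration has \emph{fast speed but} — since they are not both fast and large — must then have \emph{small load}; wait, more carefully: a virtual machine's configuration is either slow ($s_c<\kappa$) or has small load ($w_c<\kappa\copt$). For the small-load configurations, each contributes at most $(1+2\eps)\kappa\copt$, and there are at most $|C|\le \bigl(\tfrac{1+\eps}{\eps\pho}+1\bigr)^{\tilde\mu(\lambda+1)}$ distinct load-bound/speed pairs times the configuration multiplicity; more usefully, for a \emph{fixed} speed $s$ the number of distinct small load bounds is a constant and the number of configurations per (speed, load) pair is $\bigl(\tfrac{1+\eps}{\eps\pho}+1\bigr)^{\tilde\mu(\lambda+1)}$, and summing the geometric-like series of load bounds $w_c<\kappa\copt$ over all speeds $s\le 1$ and over the at most $|S|$ speeds gives a bound of the form $\kappa\copt\cdot(1+2\eps)\cdot\bigl(\tfrac{1+\eps}{\eps\pho}+1\bigr)^{\tilde\mu(\lambda+1)}\cdot\bigl(\tfrac{(\oeps)^3}{\eps^2}\bigr)$, where the last factor absorbs the geometric sums over load bounds and speeds. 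For the slow configurations ($s_c<\kappa$), one uses that the total \emph{size} of jobs on such a machine is $s_c\cdot\Lambda_c\le \kappa\cdot(1+2\eps)w_c$, and $w_c\le(\oeps)\copt$, so each such virtual machine contributes job-size at most $\kappa(1+2\eps)(\oeps)\copt$; again multiplying by the number of such configurations (which, for each of the at most $\log_{(\oeps)}\tfrac1\kappa+2$ slow speeds, is at most the constant $\bigl(\tfrac{1+\eps}{\eps\pho}+1\bigr)^{\tilde\mu(\lambda+1)}$ times the number of load bounds) yields a bound of the same shape.

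Plugging in the stated value $\kappa = \frac{\eps\sopt}{(1+2\eps)\cdot(\frac{1+\eps}{\eps\pho}+1)^{\tilde\mu(\lambda+1)}\cdot(\frac{(\oeps)^3}{\eps^2})}$ is exactly calibrated so that $(1+2\eps)\cdot\kappa\cdot(\text{that product of constants})\le \eps\sopt$, hence each of the two contributions is at most $\eps\copt\sopt$, and together the total size of jobs on all virtual machines is at most $2\eps\copt\sopt$. The main obstacle is the bookkeeping in the second-to-last step: getting the geometric sums over the possible load bounds $w_c$ (which range over integer powers of $\oeps$ up to $\kappa\copt$, or up to $(\oeps)\copt$ in the slow case) and over the possible speeds to collapse into the single factor $\frac{(\oeps)^3}{\eps^2}$, and making sure the per-(speed,load) configuration count $\bigl(\tfrac{1+\eps}{\eps\pho}+1\bigr)^{\tilde\mu(\lambda+1)}$ is used correctly — i.e. that the number of \emph{virtual} machines really is bounded by the number of configurations with $\hat z_c>0$ and not something larger. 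Once that counting is pinned down, substituting the definition of $\kappa$ finishes the proof mechanically, and I would also record that this same $\kappa$ satisfies $\kappa\le\eps^{2\rho+3}=\eps^{23}$, which was needed earlier.
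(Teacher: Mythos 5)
Your proposal is correct and follows essentially the same route as the paper: at most one virtual machine per configuration, each with load at most $(1+2\eps)w_c$, a case split into slow-speed configurations and (necessarily small-load) fast-speed configurations using the integrality of the counters for fast speed and large load, geometric sums over the possible load bounds and speeds yielding the factor $\frac{(\oeps)^3}{\eps^2}\kappa\copt$ in each case, and finally the substitution of the definition of $\kappa$. The bookkeeping you flag as the remaining obstacle is exactly what the paper carries out, and it collapses as you predict.
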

\begin{proof}
Since there is at most one virtual machine assigned each configuration and $\mathcal{i}$ is not virtual, the total size of jobs on all virtual machines with load $w$ and speed $s$ is at most
$\left(1 + 2\eps\right) w \cdot s\cdot \left(\frac{1+\eps}{\eps\pho}+1\right)^{\tilde{\mu}(\lambda+1)}$.  Next we sum these bounds over all $w$, and conclude that the total size of jobs on all virtual machines of speed $s$ where machines of speed $s$ are slow machines is at most \\
$\left(1 + 2\eps\right) s\cdot \left(\frac{1+\eps}{\eps\pho}+1\right)^{\tilde{\mu}(\lambda+1)}\cdot \left(\sum_{l=0}^{\infty}(\oeps)^{1-l}\copt\right)= \left(1 + 2\eps\right) s\cdot \left(\frac{1+\eps}{\eps\pho}+1\right)^{\tilde{\mu}(\lambda+1)}\cdot \left(\frac{(\oeps)^2\copt}{\eps}\right)$.

The total size of jobs on all virtual machines of slow speed is at most
\begin{align*}
&\left(1 + 2\eps\right)\cdot\left(\frac{1+\eps}{\eps\pho}+1\right)^{\tilde{\mu}(\lambda+1)}\cdot \left(\frac{(\oeps)^2\copt}{\eps}\right)\cdot \left(\sum_{l=0}^{\infty}(\oeps)^{-l}\kappa\right)\\
&= \left(1 + 2\eps\right)\cdot\left(\frac{1+\eps}{\eps\pho}+1\right)^{\tilde{\mu}(\lambda+1)}\cdot \left(\frac{(\oeps)^3\kappa\copt}{\eps^2}\right)\ .
\end{align*}

Next we consider the total size of jobs assigned to virtual machines of fast machines.  Then, by the requirement that the MILP solution has integral configuration counters for fast machines and large loads, these configurations of virtual machines that are fast are of small loads.  The total size of jobs on all virtual machines of speed $s$ and small load is at most
$$\left(1 + 2\eps\right) s\cdot \left(\frac{1+\eps}{\eps\pho}+1\right)^{\tilde{\mu}(\lambda+1)}\cdot \left(\sum_{l=0}^{\infty}(\oeps)^{-l}\kappa\copt\right) = \left(1 + 2\eps\right) s\cdot \left(\frac{1+\eps}{\eps\pho}+1\right)^{\tilde{\mu}(\lambda+1)}\cdot \left(\frac{(\oeps)\kappa\copt}{\eps}\right).$$
Therefore, the total load on all virtual machines of fast speed (speed at least $\kappa$ and at most $1$) and small load is at most
$\left(1 + 2\eps\right)\cdot \left(\frac{1+\eps}{\eps\pho}+1\right)^{\tilde{\mu}(\lambda+1)}\cdot \left(\frac{(\oeps)\kappa\copt}{\eps}\right)\cdot
\left(\sum_{l=\lfloor\log_{(\oeps)}\kappa\rfloor}^{0}(\oeps)^{l}\right) \leq \left(1 + 2\eps\right)\cdot \left(\frac{1+\eps}{\eps\pho}+1\right)^{\tilde{\mu}(\lambda+1)}\cdot \left(\frac{(\oeps)\kappa\copt}{\eps}\right)\cdot\left(\sum_{l=0}^{\infty}(\oeps)^{-l}\right)
\leq \left(1 + 2\eps\right)\cdot \left(\frac{1+\eps}{\eps\pho}+1\right)^{\tilde{\mu}(\lambda+1)}\cdot \left(\frac{(\oeps)^3\kappa\copt}{\eps^2}\right)$.
The total load on all virtual machines is at most
\begin{align*}
2\left(1 + 2\eps\right)\cdot \left(\frac{1+\eps}{\eps\pho}+1\right)^{\tilde{\mu}(\lambda+1)}\cdot \left(\frac{(\oeps)^3\kappa\copt}{\eps^2}\right)\ .
\end{align*}
Thus using the definition of $\kappa$, this is at most
$2\eps\sopt\copt$ as we claimed.
\qed \end{proof}

The configuration assigned to machine $\mathcal{i}$ has a load bound of at least $\frac{\copt}{\oeps}$ and at most $\copt \cdot(\oeps)$.  Therefore, after the final rounding step, the load of machine $\mathcal{i}$ is at most $(1+3\eps)\copt+\eps\copt+2\eps\copt = (1+6\eps)\cdot \copt$.  Let $\Lambda_i$ be the load of machine $i$ in the output schedule and note that if $i$ has assigned configuration $c$ then $\Lambda_i \leq (1+2\eps)w_c+3\eps\copt \leq (1+6\eps)w_c$.
Thus the objective value of the output schedule is 
\begin{align*}
\max_{i\in M}\Lambda_i + \sum_{i\in M}\Lambda_i^\phi \leq (1+6\eps)\copt + \sum_{c\in C} z_c\cdot (1+6\eps)^\phi w_c \leq (1+6\eps)^\phi obj(\textsf{sol})
\end{align*} and we conclude that Theorem \ref{theorem_2} holds.

\paragraph{Summary of the scheme.}
The algorithm initially uses the rounding to simplify the instance and uses a guessing step to guess information from the rounded instance to create the configurations and the MILP. Using the generated configurations and the MILP the algorithm finds the least cost solution from among all the MILP solutions for each value of the guessed information. This solution is then transformed into a schedule.
The algorithm for solving the MILP has a complexity of $2^{O(d\log d)}\cdot poly(n)$ using Lenstra's algorithm where $d$ is the number of variables that are required to be integer in the MILP, and
\[
	d = \left(\log_{(\oeps)} \frac{1}{\kappa} + 2\right)^2\cdot\left(\frac{1+\eps}{\eps\pho}+1\right)^{\tilde{\mu}(\lambda+1)}\ = f'\left(\frac{1}{\eps}\right)\ ,
\]
since $\kappa$, $\tilde{\mu}$, and $\lambda$ are functions of $\eps$.  Alll other steps of the algorithm runs in polynomial time and the number of possibilities of the guessed information is also upper bounded by a polynomial of the input encoding length. 
Thus the complexity of the scheme is $f\left(\frac{1}{\eps}\right)\cdot poly(n)$, where $f\left(\frac{1}{\eps}\right)$ is a doubly exponential function in $\frac{1}{\eps}$.

Theorem \ref{theorem_1} guarantees feasibility of the solution of the algorithm and using Theorem \ref{theorem_2} the schedule can be generated for the least cost solution with the proved approximation ratio.  Thus, we have established our result stated as follows.

\begin{theorem}\label{theorem_3}
Problem \prob\ admits an EPTAS.
\end{theorem}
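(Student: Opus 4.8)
The plan is to assemble the components developed in the previous sections into a single algorithm and then invoke the standard scaling reduction from the preliminaries: it suffices to exhibit an algorithm running in time $f(1/\eps)\cdot poly(n)$ that returns a feasible schedule of cost at most $(\oeps)^c$ times the optimum for some absolute constant $c$, since replacing $\eps$ by $\eps' = \Theta(\eps)$ then yields the $(1+\eps)$ guarantee of an EPTAS. Recall also the observation that scaling every machine load by a factor $\oeps$ scales the objective by at most $(\oeps)^{\phi}$, which lets us freely absorb geometric-rounding losses into the constant $c$.

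First I would reduce to the rounded instance $I'$ via Proposition~\ref{rounded_instance_solution}, at a multiplicative loss of at most $(\oeps)^{3\phi}$ in the objective and with every $I'$-feasible output being $I$-feasible. On $I'$, Lemma~\ref{makespan_on_fast} and then Lemma~\ref{active_periods} supply a near-optimal schedule, at a further loss of $(\oeps)^{2\phi}$, in which the makespan is attained on a machine of speed at least $\eps^2$ and every machine is active in at most $\tilde{\mu}$ consecutive time periods — exactly the class of schedules the configurations and the MILP are designed to encode. Next I would enumerate the (polynomially many, by the counting lemma in Section~\ref{guessing}) candidate pairs $(\copt,\sopt)$; for the analysis only the iteration with the correct guess matters, while every iteration outputs a genuinely feasible schedule (or reports infeasibility), so taking the cheapest output is sound.

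For each guess I would compute the set $C$ of feasible configurations (in time $f(1/\eps)\cdot poly(n)$, as argued after the definition of feasibility) and build the MILP of Section~\ref{confs}. The number of integer variables is $d=\left(\log_{(\oeps)}\tfrac{1}{\kappa}+2\right)^2\left(\tfrac{1+\eps}{\eps\pho}+1\right)^{\tilde{\mu}(\lambda+1)}$, which depends only on $\eps$ since $\kappa,\tilde{\mu},\lambda,\rho$ are functions of $\eps$; hence the algorithms of \cite{lenstra1983,kannan1983} solve the MILP in time $2^{O(d\log d)}\cdot poly(n)=f(1/\eps)\cdot poly(n)$. By Theorem~\ref{theorem_1}, for the correct guess the MILP admits a feasible point of cost at most $(\oeps)^{\phi}$ times the cost of the schedule from Lemma~\ref{active_periods}, so its optimum is at least that cheap. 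Finally, Theorem~\ref{theorem_2} converts the optimal MILP solution into a genuine feasible schedule of $I$ at a further loss of at most $(1+6\eps)^{\phi}$. Composing all loss factors, the output for the correct guess — hence the cheapest output overall — costs at most $(\oeps)^{c}$ times the optimum for a constant $c = O(\phi)$, and the total running time is $poly(n)$ times the number of guesses times $f(1/\eps)$, i.e.\ of EPTAS form; rescaling $\eps$ completes the argument.

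The main obstacle — and the reason the real work lies in Theorem~\ref{theorem_2} rather than here — is that the MILP optimum is integral only in the configuration counters of fast speed and large load, so the remaining fractional counters must be rounded up, spawning \emph{virtual machines} whose jobs, together with the small jobs lost to the rounding down of the $\alpha_{r'k}$ counters, must eventually be folded onto the distinguished machine $\mathcal{i}$ without increasing its load by more than an $O(\eps)$ factor. Controlling this is precisely what dictates the choices $\rho=10$ and $\kappa\le\eps^{2\rho+3}$: there are only $f(1/\eps)$ slow or small-load configurations of each load bound and speed, so summing their loads geometrically over load bounds and over speeds below $\kappa$ yields total virtual load $O(\eps)\copt\sopt$, and the leftover-small-job sizes likewise telescope to $O(\eps)\copt$. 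Once those estimates are in place the composition of approximation factors and the running-time bookkeeping are routine, so the proof of Theorem~\ref{theorem_3} itself is short.
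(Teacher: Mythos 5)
Your proposal is correct and follows essentially the same route as the paper: it assembles Proposition~\ref{rounded_instance_solution}, Lemmas~\ref{makespan_on_fast} and~\ref{active_periods}, the polynomial bound on the guesses, Theorems~\ref{theorem_1} and~\ref{theorem_2}, and the Lenstra/Kannan bound $2^{O(d\log d)}\cdot poly(n)$ on solving the MILP with $d=f'(1/\eps)$ integer variables, then composes the constant number of $(\oeps)^{O(\phi)}$ losses and rescales $\eps$. This matches the paper's own summary-of-the-scheme argument for Theorem~\ref{theorem_3}.
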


\end{document}